\def\ShowAuthNotes{0}
\definecolor{DarkBlue}{RGB}{0,0,150}
\titleformat{\section}{\large\bf}{\thesection}{1em}{}
\titleformat{\subsection}{\normalsize\bf}{\thesubsection}{1em}{}
\newcommand{\authnote}[3]{\textcolor{#3}{[{\footnotesize {\bf #1:} { {#2}}}]}}
\newcommand{\authnote}[3]{}
\newcommand{\PRF}{\mathsf{PRF}}
\newcommand{\PRS}{\mathsf{PRS}}
\newcommand{\Keys}{\mathcal{K}}
\newtheorem{observation}{Observation}
\newtheorem{remark}{Remark}
\newtheorem{theorem}{Theorem}
\newtheorem{proposition}[theorem]{Proposition}
\newtheorem{definition}{Definition}
\newtheorem{claim}[theorem]{Claim}
\newtheorem{lemma}[theorem]{Lemma}
\newtheorem{conjecture}{Conjecture}
\newtheorem{notation}[definition]{Notation}
\newtheorem{corollary}[theorem]{Corollary}
\newtheorem{fact}[theorem]{Fact}
\def\bbC{{\mathbb C}}
\def\bbE{{\mathbb E}}
\def\bbF{{\mathbb F}}
\def\bbN{{\mathbb N}}
\def\bbR{{\mathbb R}}
\def\HT{{\bf HT}}
\def\binset{\{0,1\}}
\newcommand{\abs}[1]{\left\vert {#1} \right\vert}
\newcommand{\norm}[1]{\left\| {#1} \right\|}
\newcommand{\td}{\mathrm{TD}}
\newcommand{\qprf}{\mathrm{QPRF}}
\newcommand{\prs}{\mathrm{PRS}}
\newcommand{\ars}{\mathrm{ARS}}
\def\poly{{\rm poly}}
\def\negl{{\rm negl}}
\newcommand{\ket}[1]{|{#1}\rangle}
\newcommand{\bra}[1]{\langle{#1}|}
\title{
	(Pseudo) Random Quantum States with Binary Phase
	\author{Zvika Brakerski\thanks{Weizmann Institute of Science, \texttt{zvika.brakerski@weizmann.ac.il}. Supported by the Israel Science Foundation (Grant No.\ 468/14), Binational Science Foundation (Grants No.\ 2016726, 2014276), and by the European Union Horizon 2020 Research and Innovation Program via ERC Project REACT (Grant 756482) and via Project PROMETHEUS (Grant 780701).} \and Omri Shmueli\thanks{Tel Aviv University, \texttt{omrishmueli@mail.tau.ac.il}. Supported by the Zevulun Hammer Scholarship from the Council for Higher Education in Israel, and by Israel Science Foundation Grant No. 18/484, and by Len Blavatnik and the Blavatnik Family Foundation.}}}
\date{\vspace{-5ex}}
\date{}
\begin{document}
	\maketitle

	\begin{abstract}
		We prove a quantum information-theoretic conjecture due to Ji, Liu and Song (CRYPTO 2018) which suggested that a uniform superposition with random \emph{binary} phase is statistically indistinguishable from a Haar random state. That is, any polynomial number of copies of the aforementioned state is within exponentially small trace distance from the same number of copies of a Haar random state.
		
		As a consequence, we get a provable elementary construction of \emph{pseudorandom} quantum states from post-quantum pseudorandom functions. Generating pseduorandom quantum states is desirable for physical applications as well as for computational tasks such as quantum money. 
		We observe that replacing the pseudorandom function with a $(2t)$-wise independent function (either in our construction or in previous work), results in an explicit construction for \emph{quantum state $t$-designs} for all $t$. In fact, we show that the circuit complexity (in terms of both circuit size and depth) of constructing $t$-designs is bounded by that of $(2t)$-wise independent functions.
		Explicitly, while in prior literature $t$-designs required linear depth (for $t > 2$), this observation shows that polylogarithmic depth suffices for all $t$.
		
		We note that our constructions yield pseudorandom states and state designs with only real-valued amplitudes, which was not previously known. Furthermore, generating these states require quantum circuit of restricted form: applying one layer of Hadamard gates, followed by a sequence of Toffoli gates. This structure may be useful for efficiency and simplicity of implementation.
	\end{abstract}

	\section{Introduction}
	Randomness is one of the most fundamental resources for computation, and is indispensable for algorithms, complexity theory and cryptography. It is also a foundational tool for science in general, for purposes of describing and modeling natural phenomena. 
	As our understanding of nature expands to quantum phenomena, the importance of understanding the uniform distribution over quantum states, and being able to sample from it, naturally emerges. 
	
	Quantum states can be described as unit vectors in a high-dimensional complex Hilbert space. Thus, a random quantum state is just a random unit vector on this abstract sphere. This distribution is also referred to as the Haar measure over quantum states. We note that this is a continuous distribution, even if the Hilbert space is finite dimensional (i.e.\ can be described by a finite number of qubits). Since quantum states cannot be duplicated, the ability to generate random quantum states refers to the ability to generate multiple copies of the same random state vector. (In fact, a single copy of a quantum random state is identical to a classical random state.)  Haar random quantum states have numerous computational and physical applications. The former includes optimal quantum communication channels \cite{lloyd1997capacity}, efficient quantum POVM measurements \cite{renes2004symmetric} which are in turn useful in quantum state tomography, and gate fidelity estimation \cite{dankert2009exact}. The latter includes constructing physical models of quantum thermalization \cite{popescu2006entanglement}.
	
	Since random states have infinitely long descriptions (and super-exponential even if restricting to some finite precision), there is extensive literature studying approximate notions and specifically the notion of $\epsilon$-approximate $t$-designs. These are distributions whose $t$-tensor (i.e.\ taking $t$ copies of a sample from this distribution) are $\epsilon$ indistinguishable from (a $t$-tensor of) Haar (using the standard notion for statistical indistinguishability known as trace distance). We adopt the standard asymptotic convention and require by default that $\epsilon$ is negligible in our ``security parameter'', which we associate with the logarithm of the dimension of the Hilbert space. In this work we focus on quantum states over $n$ qubits (i.e.\ $2^n$ dimensional Hilbert space), so we associate our security parameter with $n$. However, our methods are extendable to any finite-dimensional space (with efficient representation). There is extensive literature studying (approximate) designs with bounded $t$, which also carry physical significance, see e.g.\ \cite{ambainis2007quantum, dankert2009exact, harrow2009random, nakata2013diagonal, nakata2014generating, kueng2015qubit}. Indeed, it is possible to efficiently generate $t$-designs using quantum circuits of size $\poly(t,n)$. Up to asymptotics, this matches the information theoretic bound (however, the important aspect of the depth complexity of generating $t$-designs remained open, to the best of our knowledge), and one cannot hope to efficiently generate $t$-designs for super-polynomial $t$.

	\paragraph{Asymptotically Random States, Pseudorandom States and the JLS Conjecture.}
	Ji, Liu and Song \cite{JLS18} (henceforth JLS) recently proposed to extend the notion of approximate designs. They proposed the notion of a \emph{pseudorandom quantum state} (PRS) which has a finite description but is \emph{computationally indistinguishable} from Haar given a $t$-tuple, for \emph{any} $t=\poly(n)$. Thus, for any computationally bounded purpose (experiment, naturally occurring process) a PRS is indistinguishable from a Haar state, regardless of the number of copies. They also showed that PRS are useful for cryptographic applications such as quantum money.
	
	Furthermore, \cite{JLS18} proposed an insightful template for constructing PRS. They start by showing that given quantum RAM access to exponentially many classical random bits, it is possible to construct a $\negl(n)$-approximate $n^{\omega(1)}$-design. Let us call such a distribution ARS, for Asymptotically Random State.\footnote{Actually, their ARS, as well as the one proven in this work, is even stronger: they show that for all $t$, their distribution is $O(t^2)/2^n$-approximate $t$-design.} An ARS is a statistical notion of PRS which has asymptotic limitations but no computational restrictions. Then, replacing the exponential random string with a quantum-query-resistant classically-computable pseudorandom function (PRF), the PRS construction naturally follows from ARS. The existence of such PRF is implied by the existence of quantum secure one-way functions \cite{zhandry2012construct}.
	
	The ARS construction of JLS is quite straightforward to describe. Generate a uniform superposition over all strings $x \in \binset^n$. This is described in the standard Dirac notation as $\sum_x \ket{x}$ (with some normalization factor). Then, assign a random quantum phase to each component $x$, i.e.\ generate $\sum_x \alpha_x \ket{x}$ for random independent roots of unity $\alpha_x$. To cope with finite precision, $\alpha_x$ is taken to a finite but exponential resolution $\alpha_x = \omega_{2^n}^{f(x)}$, where $f: \binset^n \to [2^n]$ is a random function and $\omega_{2^n}$ is the $2^n$-th root of unity. Given RAM access to the truth table of $f$, this state can be efficiently computed using Quantum Fourier Transform (QFT) modulo $2^n$.
	
	JLS then conjecture (but were unable to prove) that a much simpler construction, where $\alpha_x=(-1)^{f(x)}$, should also imply ARS. That is, replacing the ``high-resolution'' random phase, by the simplest binary phase. While this is only one of a few conjectures made in that work, it is the only one relevant to our work and we thus refer to it simply as the JLS conjecture.
	\begin{conjecture}[\cite{JLS18}, restated]
		The distribution over $n$-qubit quantum states defined by 
		\begin{align*}
		2^{-n/2}\sum_{x\in \binset^n} (-1)^{f(x)}\ket{x}
		\end{align*} where $f: \binset^n \to \binset$ is a random function, is an ARS.
	\end{conjecture}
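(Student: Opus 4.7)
My plan is to compute the $t$-copy expected density matrix $\rho_t := \mathbb{E}_f[|\phi_f\rangle\langle\phi_f|^{\otimes t}]$ of the binary-phase state in the computational basis and compare it directly to the Haar $t$-copy density matrix $\rho_t^{\mathrm{Haar}} = \binom{2^n+t-1}{t}^{-1}\Pi_{\mathrm{sym}}$, where $\Pi_{\mathrm{sym}}$ projects onto the symmetric subspace of $(\bbC^{2^n})^{\otimes t}$. Expanding $|\phi_f\rangle^{\otimes t}$ and averaging over uniform $f$, the $(X,Y)$-matrix element of $\rho_t$, for tuples $X=(x_1,\ldots,x_t)$ and $Y=(y_1,\ldots,y_t)$ in $(\binset^n)^t$, equals $2^{-nt}$ exactly when for every $z\in\binset^n$ the combined multiplicity $c_z(X)+c_z(Y)$ is even, and vanishes otherwise. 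The corresponding Haar entry is $\binom{2^n+t-1}{t}^{-1}/t!$ times the number of permutations $\sigma\in S_t$ with $Y=\sigma\cdot X$.

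The pivotal observation is that if $X$ has $t$ pairwise distinct entries, then the parity constraint on $(X,Y)$ already forces $Y$ to be a permutation of $X$: since $c_z(X)=1$ for each $z\in X$, we must have $c_z(Y)$ odd, hence $\geq 1$, and then $\sum_z c_z(Y)=t=|X|$ forces $c_z(Y)=1$ on $X$ and $0$ elsewhere. Thus on the subspace $\Pi_D$ spanned by distinct-entry tuples, the supports of $\rho_t$ and $\rho_t^{\mathrm{Haar}}$ coincide, and their nonzero entries differ only by the global scalar $t!\binom{2^n+t-1}{t}/2^{nt}=\prod_{i=0}^{t-1}(1+i/2^n)=1+O(t^2/2^n)$. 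Applying the same pigeonhole argument symmetrically in the two tuples also shows that $\rho_t$ and $\rho_t^{\mathrm{Haar}}$ are block-diagonal with respect to the decomposition $\Pi_D\oplus(I-\Pi_D)$, so $\Pi_D$ commutes with each of them.

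The trace distance then splits as $\|\rho_t-\rho_t^{\mathrm{Haar}}\|_1\leq \|\Pi_D(\rho_t-\rho_t^{\mathrm{Haar}})\Pi_D\|_1+\|(I-\Pi_D)(\rho_t-\rho_t^{\mathrm{Haar}})(I-\Pi_D)\|_1$. By the entrywise matching above, the first term equals $O(t^2/2^n)\cdot \tr(\Pi_D\rho_t^{\mathrm{Haar}}\Pi_D)=O(t^2/2^n)$. The second term is bounded (using that both blocks are positive semidefinite) by $\tr((I-\Pi_D)\rho_t)+\tr((I-\Pi_D)\rho_t^{\mathrm{Haar}})$: the first summand is exactly the birthday-collision probability $\Pr_{X\text{ uniform}}[X\text{ not distinct}]\leq \binom{t}{2}/2^n$, and the second is $1-\binom{2^n}{t}/\binom{2^n+t-1}{t}=O(t^2/2^n)$. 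This yields $\td(\rho_t,\rho_t^{\mathrm{Haar}})=O(t^2/2^n)$, which is negligible for every $t=\poly(n)$ and in fact gives the quantitative ARS bound promised in the introduction.

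The step I expect to require the most care is the pigeonhole/parity argument that makes the \emph{binary} phase already reproduce the full symmetric-subspace structure on $\Pi_D$: it is simultaneously responsible for the entrywise agreement of the two density matrices on the distinct block and for the commutation $[\Pi_D,\rho_t]=0$ that lets the block decomposition go through cleanly. Once this structural step is in hand, the remaining estimates are routine birthday/collision counting together with the elementary identity $\prod_{i=0}^{t-1}(1+i/2^n)=1+O(t^2/2^n)$.
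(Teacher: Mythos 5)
Your argument is correct, and it takes a genuinely different route from the paper. The paper never compares the binary-phase state to Haar directly: it bounds $\td(\rho_{(t,n,2)},\rho_{(t,n,2^n)})$, the distance to the \emph{complex-phase} state of JLS, and then invokes JLS's exact formula for the distance from $\rho_{(t,n,2^n)}$ to Haar. For that difference matrix (supported exactly on pairs that are ``stabilizations but not permutations''), the paper proves two spectral lemmas --- an upper bound $\binom{2^n+t-1}{t}-\binom{2^n}{t}$ on the rank, obtained by collapsing rows within permutation classes and discarding the classes of distinct tuples, and a lower bound $-t!/2^{tn}$ on the most negative eigenvalue, obtained by explicitly triangularizing $\rho_n-\lambda I$ over the block structure of stabilization classes --- and combines them using $\tr(\rho_n)=0$. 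You instead split the space by the diagonal projector $\Pi_D$ onto distinct tuples, use the parity/pigeonhole observation (which is precisely the paper's Observation 2) to show that on the distinct block the binary-phase density matrix is an exact scalar multiple $\prod_{i=0}^{t-1}(1+i/2^n)$ of the Haar block, and dispose of the complement by a birthday bound on its trace. This bypasses the eigenvalue analysis and the complex-phase intermediary entirely, making the proof more self-contained and elementary at the cost of a marginally different constant in the $O(t^2/2^n)$ bound (the paper gets $4t^2/2^n$ explicitly); the paper's route additionally yields the spectral structure of the difference matrix (its rank and a sharp bound on its negative spectrum), which may be of independent interest. All the steps you flag check out: block-diagonality of both states with respect to $\Pi_D$ holds because stabilization with a distinct tuple forces a permutation, and the two compressions to $I-\Pi_D$ are PSD, so their trace norms are their traces, which are exactly the collision probabilities you compute.
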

	
	To highlight the gap between the conjecture and the provable ARS construction of JLS, let us describe a crucial point in the analysis of JLS. The analysis is based on an equivalence relation between $t$-tuples of $n$-bit strings, which naturally arises from the expression for statistical distance from Haar. The tuples $(x_1, \ldots, x_t)$, $(y_1, \ldots, y_t)$ are equivalent if their histograms (i.e.\ the number of times each $n$-bit string appears) are equal modulo $2^n$. Since $t < 2^n$ this condition is equivalent to requiring that the tuples are permutations of each other, which makes it possible to analyze the equivalence classes of this relation and for the analysis to go through.
	
	In the binary setting, the equivalence relates tuples whose histograms are equal modulo $2$. Thus the equivalence classes can no longer be described simply as a set and all of its permutations, and they don't even have the same size anymore. This creates many additional terms in the so called density matrix of the state (which is a complex matrix of exponential dimensions $2^{tn}\times 2^{tn}$). In order to prove the conjecture, one will have to show that the effect of these exponentially many new terms on the spectrum of the matrix is negligible and there seems to be no straightforward handle for this analysis. We resolve this problem in this work.

	\paragraph{Our Results -- Proving the Conjecture.} We prove the JLS conjecture, in fact we prove that the binary ARS implied by the conjecture has comparable properties to the prior construction (that used complex phase).
	
	\begin{theorem}[Main Result]\label{main_Thm}
		The distribution over $n$-qubit quantum states defined by 
		\begin{align*}
		2^{-n/2}\sum_{x\in \binset^n} (-1)^{f(x)}\ket{x}
		\end{align*} where $f: \binset^n \to \binset$ is a random function, is a $\frac{4t^2}{2^n}$-approximate $t$-design for all $t$, and thus an ARS.
	\end{theorem}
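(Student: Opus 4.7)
My approach is to diagonalize $\rho_n := \mathbb{E}_f[(\ket{\phi_f}\bra{\phi_f})^{\otimes t}]$ explicitly in the computational basis and compare its spectrum to that of the Haar target $\rho_H := P_{\mathrm{sym}}/D_{\mathrm{sym}}$, where $D_{\mathrm{sym}} = \binom{2^n+t-1}{t}$. Averaging the phase $(-1)^{\sum_i f(x_i)+f(y_i)}$ over uniform $f$ yields $(\rho_n)_{\vec{x},\vec{y}} = 2^{-tn}\cdot\mathbbm{1}[h_{\vec{x}}\equiv h_{\vec{y}}\pmod 2]$, where $h_{\vec{x}}(z) := |\{i:x_i=z\}|$. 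This presents $\rho_n$ as a direct sum of rank-one blocks indexed by parity patterns $p:\binset^n\to\binset$ satisfying $|p|_1\leq t$ and $|p|_1\equiv t\pmod 2$; the $p$-th block is the all-ones matrix $(1/2^{tn})J_{N_p}$, with unique nonzero eigenvalue $m_p := N_p/2^{tn}$ and eigenvector $\ket{v_p}=\sum_{\vec{x}:\,h_{\vec{x}}\equiv p}\ket{\vec{x}}$, where $N_p$ counts tuples with parity pattern $p$.

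Because each parity class is permutation-invariant, every $\ket{v_p}$ lies in the symmetric subspace $V_{\mathrm{sym}}$. Hence $\rho_n$ has $P_t := \sum_{k\leq t,\,k\equiv t\bmod 2}\binom{2^n}{k}$ positive eigenvalues and $D_{\mathrm{sym}}-P_t$ zero eigenvalues inside $V_{\mathrm{sym}}$ (and vanishes outside). Since $\rho_H$ restricts to $(1/D_{\mathrm{sym}})I$ on $V_{\mathrm{sym}}$, the two matrices commute, and the trace distance reduces to
\begin{align*}
\tfrac{1}{2}\|\rho_n-\rho_H\|_1 \;=\; \tfrac{1}{2}\sum_p\left|m_p-\tfrac{1}{D_{\mathrm{sym}}}\right| \;+\; \tfrac{1}{2}\cdot\tfrac{D_{\mathrm{sym}}-P_t}{D_{\mathrm{sym}}}.
\end{align*}

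The central technical step is to verify $m_p \geq 1/D_{\mathrm{sym}}$ uniformly over valid $p$, so that the absolute values can be discarded. For ``distinct'' patterns ($|p|_1=t$) this is immediate from $N_p=t!$ and $1/D_{\mathrm{sym}}=t!/\prod_{i=0}^{t-1}(2^n+i)\leq t!/2^{tn}$. For $|p|_1=t-2k$ with $k\geq 1$, I would lower-bound $N_p$ by restricting to histograms that take value $2$ on $k$ of the ``even-parity'' slots and match $p$ elsewhere; this sub-class alone contributes $\binom{2^n-(t-2k)}{k}\cdot t!/2^k \gg t!$ in the nontrivial regime $2^n\geq 4t^2$ (outside which the claimed bound is trivial). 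Combined with $\sum_p m_p=1$, the trace distance then collapses to $1 - P_t/D_{\mathrm{sym}}$. I expect this combinatorial lower bound to be the main obstacle: it is exactly the ``exponentially many new terms'' flagged by the authors as the bottleneck, and it requires showing that in every non-distinct parity class the surplus of tuples arising from pairwise collisions is enough to bring $N_p$ above the Haar threshold.

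The remaining estimate is routine: since $P_t\geq\binom{2^n}{t}$ (the $k=t$ summand), one has
\begin{align*}
\frac{P_t}{D_{\mathrm{sym}}} \;\geq\; \frac{\binom{2^n}{t}}{\binom{2^n+t-1}{t}} \;=\; \prod_{i=1}^{t-1}\frac{2^n-i}{2^n+i} \;\geq\; \prod_{i=1}^{t-1}\left(1-\frac{2i}{2^n}\right) \;\geq\; 1 - \frac{t(t-1)}{2^n},
\end{align*}
by the Weierstrass product inequality, which yields $\tfrac{1}{2}\|\rho_n-\rho_H\|_1 \leq t(t-1)/2^n \leq 4t^2/2^n$ as claimed.
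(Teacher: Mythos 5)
Your proposal is correct, and it takes a genuinely different route from the paper. You diagonalize $\rho_{(t,n,2)}$ exactly: the stabilization (parity-pattern) classes make it a direct sum of scaled all-ones blocks, so it equals $\sum_p \frac{N_p}{2^{tn}}\ket{\hat v_p}\bra{\hat v_p}$ with the $\ket{\hat v_p}$ orthonormal and, crucially, permutation-invariant, hence commuting with the Haar density matrix $P_{\mathrm{sym}}/\binom{2^n+t-1}{t}$. This lets you compare \emph{directly} to Haar and obtain the exact identity $\td = 1 - P_t/D_{\mathrm{sym}}$ once the single inequality $N_p \ge 2^{tn}/D_{\mathrm{sym}}$ is in hand; that inequality does close, since $2^{tn}/D_{\mathrm{sym}} \le t!$ while your sub-count gives $N_p \ge \binom{2^n-t+2k}{k}\,t!/2^k \ge t!$ whenever $2^n \ge t$ (and for $2^n < t$ the claimed bound exceeds $1$). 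The paper instead routes through the complex-phase state: it forms the difference $\rho_{(t,n,2)} - \rho_{(t,n,2^n)}$, bounds its trace norm by a rank bound times a bound on the most negative eigenvalue — the latter via a five-step Gaussian elimination computing the characteristic polynomial block by block — and then applies the triangle inequality with the JLS lemma. Your argument avoids both the intermediate state and the determinant machinery (which the paper needs precisely because its difference matrix is not PSD), gives an exact trace-distance formula rather than an upper bound, and yields the marginally sharper constant $t(t-1)/2^n$. When writing it up, do record that every admissible parity pattern ($|p|_1\le t$, $|p|_1\equiv t\bmod 2$) is actually realized by some tuple, so the number of nonzero blocks is exactly $P_t$ — though even without this, the realized patterns include all $\binom{2^n}{t}$ distinct-element ones, which is all your final estimate uses.
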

	This result has various implications that we describe below. We furthermore hope that our techniques will be useful for analyzing similarly complicated quantum states.

	We make two additional observations that refer to the requirement from a function $f$ to be plugged into either our theorem or that of JLS in order to imply PRS and quantum $t$-designs.
	\begin{enumerate}
		\item If we wish to obtain a PRS, the requirement of using a full-fledged quantum secure PRF can be relaxed. In fact, it is sufficient to have a function $f$ that is indistinguishable from random while allowing only uniform superposition queries (as opposed to arbitrary superposition queries). This leads to a quantum notion which is somewhat analogous to the classical notion of weak pseudorandom functions \cite{naor1999synthesizers}, an object that can be of interest for independent investigation and possibly more efficient constructions than PRFs.
		
		\item If we only wish to obtain a $t$-design, it is sufficient to replace $f$ with a $(2t)$-wise independent function, using the fact  that given $t$-quantum-query access, a $(2t)$-wise independent function is perfectly indistinguishable from a completely random function \cite{zhandry2012construct}.
		
	\end{enumerate}
	
	\paragraph{Implications.} We find the JLS conjecture compelling from aesthetic, conceptual and perhaps even practical reasons. In terms of aesthetics, it is bothersome that one would need to go into exponentially fine-grained resolution on the phase in order to generate an ARS/PRS, being able to achieve the same parameters with a more coarse resolution (and as we show next without compromising on parameters) seems to be a more desirable state of affairs. Conceptually, the result shows that ARS, which is for all efficiently observable purposes identical to a Haar random state, can be generated using only real-valued phases. Recalling that the Haar distribution is defined over complex vectors, it is appears not obvious that it can be approximated for all observable purposes by real-valued vectors.
	
	In terms of computational complexity, our construction uses circuits with restricted structure known in the literature as $\HT$  \cite{nest2008classical}. Concretely, the circuit contains a single parallel layer of Hadamard gates, followed by a circuit of Toffoli gates.
	This model is considered fairly weak (note that Hadamard and Toffoli are not even universal for arbitrary quantum computation) and in particular $\HT$ circuits are weakly classically simulatable (i.e.\ any distribution samplable by an $\HT$ circuit followed by measurement is also classically samplable). Result shows that even such a restricted model of quantum computation is enough to approximate the Haar measure.
	
	Lastly, from a practical standpoint, replacing the function $f$ by an efficient quantum-resilient PRF yields a very simple construction of a PRS, requiring only an $\HT$ circuit with the same circuit size and depth (up to asymptotics) as that of the PRF.
	Prior provable PRS candidates do not enjoy this property and appear to require a more complicated implementation (that in particular seem to need performing the Quantum Fourier Transform modulo $2^n$, or a similar procedure) to allow for the high-resolution of complex phase.
	
	In the context of generating $t$-designs, using our aforementioned observation and replacing $f$ with a $(2t)$-wise independent function (in either our theorem or JLS) implies a $t$-design construction with circuit size $\poly(t,n)$ and depth $O(\log t \cdot \log n)$.
	We are not aware of prior constructions of $t$ designs with $o(n)$ depth for $t > 2$ in the literature.
	Moreover, the $t$-design construction which is implied by our result can be implemented by an $\HT$ circuit with the same circuit size and depth (up to asymptotics) as that of the $(2t)$-wise independent function.
	
	\paragraph{Proof High-Level Overview.} Formally speaking, the proof follows by bounding the $\ell_1$ spectral norm of the difference between the density matrix of $t$-copies of the state with binary phase, and the density matrix of $t$-copies of the state with $2^n$ roots of unity. However, one needs not know much about density matrices, it suffices to say that we have a complex Hermitian matrix of dimensions $2^{tn}\times 2^{tn}$, where the sum of all eigenvalues is $0$, and we want to bound the sum of all absolute values of eigenvalues. It is thus sufficient to consider only positive or only negative eigenvalues. 
	
	Each row of the matrix corresponds to a tuple $(x_1, \ldots, x_t) \in (\binset^n)^t$ and each column corresponds to a tuple $(y_1, \ldots, y_t) \in (\binset^n)^t$. The entry in location $(x_1, \ldots, x_t), (y_1, \ldots, y_t)$ is nonzero if the aforementioned ``histogram condition'' holds on the tuples.\footnote{Recall that the (modulo-$2$) histogram condition states that $(x_1, \ldots, x_t), (y_1, \ldots, y_t)$ are equivalent if for all $z$, the number of times $z$ appears in the first tuple and the number of times it appears in the second tuple have the same parity.} In a bit more detail, up to a global $2^{-tn}$ scaling factor, if the modulo-$2$ histogram condition holds but the modulo-$2^n$ condition (i.e.\ permutation) does not hold then the entry will be $1$, but if both hold then there is a cancellation and the entry will be $0$.
	
	We start by observing that the matrix can be decomposed into ``combinatorial blocks'', each representing an equivalence class of the histogram relation. We analyze the properties of these blocks. We then provide two structural lemmas that together imply the theorem:
	\begin{enumerate}
		\item 	We provide a non-trivial upper bound on the rank of the matrix. While it is tempting to disregard the cancellations and just count the number of nonzero blocks and their respective rank, this implies an upper bound that is too coarse. We must therefore carefully take into account the cancellations induced by permutations in order to obtain a usable bound.
		
		\item We provide an upper bound on the absolute value of each negative eigenvalue. We do this by computing the characteristic polynomial of the matrix (the polynomial whose roots are the eigenvalues), which amounts to a product of the characteristic polynomials of the blocks. Within each block we obtain a closed form formula for the characteristic polynomial and show that its root cannot exceed a bound that is determined by the cardinality of the respective equivalence class (properly normalized).
	\end{enumerate}
	
	Combining the two lemmas by multiplying the rank bound with the eigenvalue absolute value bound implies the theorem.

	\paragraph{Paper Organization.} We use standard quantum and cryptographic notations and definitions, essentially following \cite{JLS18}, see short summary in Section~\ref{sec:prelim}. Our construction is presented in Section~\ref{sec:construction} and proven in Section~\ref{sec:conjproof}.

	\section{Preliminaries}
	\label{sec:prelim}
	
	For $m \in \bbN$, we denote $[m] := \{1, \cdots, m \}$.
	For a natural number $N$, denote by $\omega_{N} := e^{\frac{2\pi i}{N}}$ the complex root of unity of order $N$.
	Also for $N$, denote by ${\cal S}(N)$ the set of unit vectors in $\bbC^{N}$, by ${\cal D}(N)$ the set of $N \times N$ density matrices over $\bbC$, and by ${\cal U}(N)$ the set of $N \times N$ unitary matrices over $\bbC$.
	Note that for $n \in \bbN$, ${\cal S}(2^n)$ is the set of $n$-qubit pure quantum states, ${\cal D}(2^n)$ is the set of $n$-qubit mixed states, and ${\cal U}(2^n)$ is the set of $n$-qubit unitaries.
	When we consider quantum algorithms, we usually think of them as a uniform family of quantum circuits.
	
	When we consider eigenvalues and singular values of matrices throughout this paper, we implicitly refer to eigenvalues and singular values that possibly repeat, e.g. $\lambda_1 \geq \lambda_2 \geq \cdots \geq \lambda_n$ for matrix with $n$, possibly identical eigenvalues.
	
	The trace distance, defined below, is a generalization of statistical distance to the quantum setting and represents the maximal distinguishing probability between quantum states.
	\begin{definition}[Trace distance]
		Let $\rho_1, \rho_2 \in {\cal D}(2^n)$ be two density matrices of $n$-qubit mixed states.
		The trace distance between them is
		$$
		\td( \rho_1, \rho_2 ) := \frac{1}{2} \norm{\rho_1 - \rho_2}_1 \enspace,
		$$
		where for a hermitian matrix $M$, $\norm{M}_1 = \sum_{i}|\lambda_i|$, where $\lambda_i$ are the eigenvalues of $M$.
	\end{definition}
	
	The following is a basic fact that shows that classical circuits are a subset of quantum circuits. Recall that the Toffoli gate implements the $3$-qubit unitary defined by $\ket{x,y,z}\to\ket{x,y,z\oplus xy}$.
	\begin{proposition}[Toffoli gate is universal for classical computation]
		Let $f: \binset^n \to \binset^m$ be a function and let $C$ be a classical circuit that computes $f$. Define the unitary $U_f: \ket{x}\ket{y} \to \ket{x}\ket{y \oplus f(x)}$. Then there exists a quantum circuit of size $O(|C|)$ consisting only of Toffoli gates that computes $U_f$ (possibly using auxiliary $\ket{0}$ qubits).
	\end{proposition}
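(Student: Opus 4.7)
The plan is to reduce to the standard paradigm of reversible classical computation. First, I would observe that the Toffoli gate, together with ancillas initialized to $\ket{0}$ and $\ket{1}$, simulates the basic classical-reversible primitives: with the target initialized to $\ket{0}$, Toffoli writes $xy$ into a fresh ancilla, giving AND; with both controls pinned to $\ket{1}$, it acts as NOT on the target; and with a single control pinned to $\ket{1}$, it acts as CNOT, which in turn gives FANOUT via $\ket{x,0}\to\ket{x,x}$. Since $\{\text{AND},\text{NOT},\text{FANOUT}\}$ is classically universal, each gate of $C$ can be emulated by $O(1)$ Toffoli gates acting on dedicated ancillas.

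Second, I would assemble these pieces into a reversible simulation of $C$. Introducing one fresh ancilla qubit per internal wire of $C$, I process the gates in topological order: each classical gate contributes $O(1)$ Toffoli gates that write the value carried by its output wire into the corresponding ancilla. After $O(|C|)$ Toffoli gates the joint state on input $\ket{x}\ket{y}$ is $\ket{x}\ket{y}\ket{g(x)}\ket{f(x)}$, where $g(x)$ collects the ``garbage'' intermediate wire values. A further $m$ Toffoli gates, each with one control fixed to $\ket{1}$ and thus acting as a CNOT, then XOR $f(x)$ into the $y$-register to produce $\ket{y\oplus f(x)}$.

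Third, I would apply Bennett's uncomputation trick: running the forward Toffoli network in reverse returns every intermediate ancilla and the $f(x)$-holding ancilla back to $\ket{0}$, without disturbing the $y$-register, which was only touched during the intermediate CNOT step. The final state is $\ket{x}\ket{y\oplus f(x)}\ket{0\cdots 0}$ exactly as required, and the whole circuit uses $O(|C|)$ Toffoli gates and a polynomial number of ancillas. No real obstacle arises; the only mild points of care are to verify that each gate-by-gate replacement truly costs a constant number of Toffolis, and that the forward and reverse passes act on matching ancillas so that the uncomputation is clean.
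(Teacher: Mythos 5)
Your proof is correct and is exactly the standard argument the paper is implicitly invoking: the paper states this proposition as a known fact and gives no proof, so your reversible-simulation construction (Toffoli simulates AND/NOT/FANOUT with pinned ancillas, compute forward with garbage, CNOT the result into the $y$-register, then uncompute via Bennett's trick) is the right way to fill it in, and the $O(|C|)$ gate count and cleanness of the ancillas both check out. One small point worth flagging, which is really an imprecision in the paper's statement rather than in your argument: your construction genuinely needs some ancillas initialized to $\ket{1}$ (to realize NOT and CNOT from Toffoli), whereas the proposition only mentions ``auxiliary $\ket{0}$ qubits.'' This is not a cosmetic issue --- a circuit consisting solely of Toffoli gates fixes the all-zeros basis state, so on input $x=0^n$, $y=0^m$ it can never produce $f(0^n)\neq 0^m$ without a $\ket{1}$ ancilla. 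The paper itself uses a $\ket{1}$ ancilla in its state-generation claim, so the intended reading is that auxiliary qubits may be prepared in either classical state; you should just state that assumption explicitly.
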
	
	
	HT circuits are quantum circuits of a restricted structure, defined as follows.
	\begin{definition}[HT Circuit]
		A quantum circuit $C$ is an $\HT$ circuit if the first layer of the circuit consists of only Hadamard gates on a subset of the qubits, and the rest of the circuit consists of only Toffoli gates.
	\end{definition}

	\subsection{Pseudorandom Functions and $k$-Wise Independent Functions}
	Here we define pseudorandom functions with quantum security (QPRFs).	
	\begin{definition}[Quantum-Secure Pseudorandom Function (QPRF)]\label{def:qprf}
		Let $\Keys = \{ \Keys_n \}_{n \in \bbN}$ be an efficiently samplable key distribution, and let $\PRF = \{ \PRF_n \}_{n \in \bbN}$, $\PRF_n : \Keys_n \times \binset^n \rightarrow \binset^n$ be an efficiently computable function.
		We say that $\PRF$ is a quantum-secure pseudorandom function if for every efficient non-uniform quantum algorithm $A$ that can make quantum queries there exists a negligible function $\negl(\cdot)$ s.t. for every $n \in \bbN$,
		$$
		\abs{\Pr_{k \gets \Keys_n}[A^{\PRF(k, \cdot) }() = 1] - \Pr_{f \gets (\binset^n)^{(\binset^n)}}[A^{f}() = 1]} \leq \negl(n) \enspace .
		$$
	\end{definition}
	In \cite{zhandry2012construct}, QPRFs were proved to exist under the assumption that post-quantum one-way functions exist. 
	
	We define $k$-wise independent functions are keyed functions s.t.\ when the key is sampled uniformly at random, then any $k$ different inputs to the function generate $k$-wise independent random variables.
	\begin{definition}[$k(n)$-Wise Independent Function]\label{def:t-wise}
		Let $k(n) : \bbN \rightarrow \bbN$ be a function, $\Keys = \{ \Keys_n \}_{n \in \bbN}$ be a key distribution, and let $f = \{ f_n \}_{n \in \bbN}$, $f_n : \Keys_n \times \binset^n \rightarrow \binset^n$ be a function. Thus, $f$ is a $k(n)$-wise independent function if for all $n$, for every distinct $k(n)$ input values $x_1, \cdots, x_{k(n)} \in \binset^n$,
		$$
		\forall y_1, \cdots, y_{k(n)} \in \binset^n : 
		\Pr_{s \gets \Keys_n}[f(s, x_1) = y_1 \land \cdots \land f(s, x_{k(n)}) = y_{k(n)}] = 2^{-n\cdot k(n)} \enspace .
		$$
	\end{definition}
	It is not a part of the standard definition, but it is usually the case that we consider $\Keys$ to be efficiently samplable and $f$ to be efficiently computable.
	
	\subsection{Quantum Randomness and Pseudorandomness}
	\subsubsection{The Haar Measure on Quantum States}
	Intuitively, the Haar measure on quantum states is the quantum analogue of the classical uniform distribution over bit strings.
	More precisely, the Haar measure is the uniform (continuous) probability distribution on quantum states. 
	Recall that an $n$-qubit quantum state can be viewed as a unit vector in $\bbC^{2^n}$, thus the Haar measure on $n$ qubits is the uniform distribution over all unit vectors in $\bbC^{2^n}$.
	
	Formally, the density matrix representing the quantum distribution of drawing a random Haar vector and outputting $t$ copies of it is given below.
	\begin{definition}[$n$-Qubits, $t$-Copy Random Haar State]
		Let $t, n \in \bbN$, we define the $n$-qubits $t$-copy random Haar mixed state to be
		$$
		\rho_{( t, n, H )} := \int_{\ket{\psi} \in {\cal S}(2^n)} ( \ket{\psi}\bra{\psi} )^{\otimes t}\mathrm{d}\mu(\psi) \enspace,
		$$
		where $\mu( \cdot )$ is the Haar measure on ${\cal S}(2^n)$.
	\end{definition}
	
	\subsubsection{Approximate Quantum State $t$-Designs}
	Approximate $t$-designs are quantum distributions that are approximately random when the number of output copies of the sampled state is restricted. The formal definition follows.
	
	\begin{definition}[$n$-Qubits, $\varepsilon$-Approximate State $t$-Design]
		Let $\varepsilon \in [0, 1], t \in \bbN$, and let $\mathcal{Q}$ be a quantum distribution over $n$-qubit states.
		We say that $\mathcal{Q}$ is an $\varepsilon$-approximate state $t$-design if
		$$
		\td\big( \bbE_{\ket{\psi} \gets \mathcal{Q}}[ (\ket{\psi} \bra{\psi})^{\otimes t} ], \; \rho_{(t, n, H)} \big) \leq \varepsilon \enspace .
		$$
	\end{definition}
	
	For the sake of completeness, we give a definition for quantum state $t$-design generators.
	\begin{definition}[$\varepsilon(n)$-Approximate State $t(n)$-Design Generator]
		Let $\varepsilon(n): \bbN \rightarrow [0, 1]$, $t(n): \bbN \rightarrow \bbN$ be functions.
		We say that a pair of quantum algorithms $(K, G)$ is an $\varepsilon(n)$-approximate state $t(n)$-design generator if the following holds:
		\begin{itemize}
			\item \textbf{Key Generation.}
			For all $n$, $K(1^n)$ always outputs a classical key $k$.
			\item \textbf{State Generation.}
			For all $n$ and for all $k$ in the image of $K(1^n)$, there exists an $n$-qubit pure state $\ket{\phi_k}$ s.t.\ $G(1^n, k)=\ket{\phi_k}$.
			\item \textbf{Approximate Quantum Randomness.}
			For all $n$, the distribution $\ket{\phi_k}_{k \gets K(1^n)}$ is an $n$-qubit, $\varepsilon(n)$-approximate state $t(n)$-design.
		\end{itemize}
	\end{definition}
	Note that we define the generator as two algorithms instead of one, to highlight the fact that a state that is sampled can be generated multiple times on demand.
	
	For the purposes of this work it is convenient to define the notion of Asymptotically Random States (ARS) as follows.
	\begin{definition}[Asymptotically Random State (ARS)]\label{def:ars}
		An Asymptotically Random State ($\ars$) is shorthand for an asymptotic sequence of $\negl(n)$-approximate $n^{\omega(1)}$-designs.
	\end{definition}
	
	\subsubsection{Quantum Pseudorandomness}
	The notion of pseudorandom quantum states was introduced in \cite{JLS18}, was shown to be implied by QPRFs, and is defined below.
	\begin{definition}[Pseudorandom Quantum State (PRS)]\label{def:prs}
		A pair of quantum polynomial-time algorithms $(K,G)$ is a Pseudorandom State Generator (PRS Generator) if the following holds:
		\begin{itemize}
			\item \textbf{Key Generation.}
			For all $n$, $K(1^n)$ always outputs a classical key $k$.
			\item \textbf{State Generation.}
			For all $n$ and for all $k$ in the image of $K(1^n)$, there exists an $n$-qubit pure state $\ket{\phi_k}$ s.t.\ $G(1^n, k)=\ket{\phi_k}$.
			\item \textbf{Security.}
			For any polynomial $t(\cdot)$ and a non-uniform efficient quantum algorithm $A$ there exists a negligible function $\negl(\cdot)$ such that for all $n \in \bbN$,
			$$
			\abs{\Pr_{k \gets K(1^n)}[A\big( \ket{\phi_k}^{\otimes t(n)} \big) = 1] -
				\Pr_{\ket{\psi} \gets \mu}[A\big(\ket{\psi}^{\otimes t(n)} \big) = 1]} \leq \negl(n) \enspace ,
			$$
			where $\mu$ is the Haar measure on $\mathcal{S}(2^{n})$.
		\end{itemize}
		If the above holds, we say that the ensemble $\PRS = \{\PRS_n\}_{n \in \bbN}$, where $\PRS_n$ is the distribution $\ket{\phi_k}_{k \gets K(1^n)}$, is a Pseudorandom Quantum State (PRS) which is generated by $(K,G)$.
	\end{definition}
	In the above definition, the number of qubits in the pseudorandom states can also be parameterized (i.e. $G(1^n, k)$ can output $m(n)$-qubit states and not necessarily $n$-qubit states), but in the current work we will ignore this.

	\newcommand{\gbin}{\mathsf{G}_{\mathrm{bin}}}
	
	\section{Construction}
	\label{sec:construction}
	
	The following construction will be the base of both our pseudorandom state and quantum state $t$-design constructions.
	\begin{definition}[Binary Phase State Generator for $F$]
		Let $\Keys = \{ \Keys_n \}_{n \in \bbN}$ be a key space and let $F = \{F_n\}_{n \in \bbN}$ be a keyed (boolean) function $F_n : \Keys_n \times \binset^n \to \binset$.
		$\gbin^{F}$ is the procedure that takes as input a $k \in \Keys_n$ and outputs the superposition
		$$
		\ket{\phi_k} := 2^{-n/2} \sum_{x \in \binset^n} (-1)^{F_k(x)}\ket{x}~.
		$$
	\end{definition}
	
	The following claim establishes that $\gbin^F$ is efficiently implementable when $F$ is. 
	\begin{claim}\label{generation_claim}
		If $F$ is computable by a classical circuit of size $s(n)$ and depth $d(n)$, then $\gbin^F$ is computable by an $\HT$ circuit of size $O(s(n))$ and depth $d(n) + 1$.
	\end{claim}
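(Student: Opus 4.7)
The plan is to build the HT circuit by prepending a single Hadamard layer — to put the main register into uniform superposition — and then invoking the preceding proposition to simulate $F_k$ with Toffolis, using the phase-kickback trick to convert the bit-flip output of the simulation into the desired sign $(-1)^{F_k(x)}$.

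Concretely, I take an $n$-qubit ``main'' register initialized to $\ket{0}^n$, a single ``phase'' ancilla initialized to $\ket{1}$, and $O(s(n))$ workspace ancillas initialized to $\ket{0}$. The first (and only) Hadamard layer of the HT circuit applies $H$ in parallel to each of the $n$ main qubits and to the phase ancilla, producing
$$2^{-n/2}\sum_{x\in\binset^n}\ket{x}\otimes\ket{-}\otimes\ket{0^{\mathrm{aux}}}.$$
I then apply the Toffoli-only realization of $U_{F_k}:\ket{x}\ket{b}\mapsto\ket{x}\ket{b\oplus F_k(x)}$ guaranteed by the preceding proposition, with the phase ancilla in the $\ket{b}$ slot and the workspace ancillas playing the role of auxiliary $\ket{0}$ qubits. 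Because $X\ket{-}=-\ket{-}$, phase kickback converts this conditional bit flip into the diagonal action $\ket{x}\mapsto (-1)^{F_k(x)}\ket{x}$ on the main register, while the phase ancilla remains in $\ket{-}$ and the workspace returns cleanly to $\ket{0^{\mathrm{aux}}}$. The overall state therefore factors as $\ket{\phi_k}\otimes\ket{-}\otimes\ket{0^{\mathrm{aux}}}$, and the main register holds $\ket{\phi_k}$ as required.

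By construction the circuit is HT (one parallel Hadamard layer followed by only Toffoli gates). The size is dominated by the Toffoli simulation of $U_{F_k}$, which the preceding proposition gives as $O(s(n))$, plus only $n+1$ Hadamards in the opening layer. For the depth, I organize the Toffoli simulation to mirror the classical circuit layer-by-layer, implementing each elementary classical gate by a constant number of Toffolis applied in parallel with its siblings; this yields a Toffoli subcircuit of depth $d(n)$, and combined with the single Hadamard layer gives total HT depth $d(n)+1$.

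The one subtle point — and the main thing I expect to require care in the formal write-up — is the clean disentanglement of the workspace from the main register: a naive forward-only Toffoli simulation stores $x$-dependent intermediate wire values and would leave the main register in a mixed state after tracing out the workspace. This is the standard reversibility obstruction and is resolved by the usual reversible-simulation technique (allocating fresh ancillas per gate together with a cleanup pass that restores intermediate wires to $\ket{0}$); the corresponding constant-factor overheads get absorbed into the $O(s(n))$ size and $d(n)+1$ depth accounting under the same conventions as the Toffoli-universality proposition.
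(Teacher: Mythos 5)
Your proposal is correct and follows essentially the same route as the paper: one parallel Hadamard layer producing $\ket{+}^{\otimes n}\ket{-}$, followed by the Toffoli realization of $U_{F_k}$ so that phase kickback yields $(-1)^{F_k(x)}$, after which the ancillas are discarded. Your closing remark about cleanly uncomputing the workspace is a subtlety the paper's one-line proof glosses over (and is indeed needed for the main register to end in the pure state $\ket{\phi_k}$); just note that under a literal reading the cleanup pass costs roughly another $d(n)$ layers, so the stated depth $d(n)+1$ should be understood up to the same constant-factor conventions as the size bound.
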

	\begin{proof}
		The algorithm of $\gbin^F$ will get as input a key $k$ and generate the state $\ket{+}^{\otimes n}\ket{-}$ by performing $(n + 1)$ Hadamard gates (in parallel) $H^{\otimes (n + 1)}$ on the ancillary classical state $\ket{0}^{\otimes n}\ket{1}$, then execute the $F_k$ circuit (which can be realized quantumly by Toffoli gates) on the state $\ket{+}^{\otimes n}\ket{-}$.
		After the execution of $F_k$, the state is 
		$$
		2^{-n/2} \sum_{x \in \binset^n}(-1)^{F_k(x)}\ket{x}\ket{-},
		$$
		thus by tracing out the last qubit we get the output state $\ket{\phi_k}$.
	\end{proof}
	We note that previous candidates required a more involved generation process which required applying quantum Fourier transform modulo $2^n$, or a similar procedure.
	
	\subsection{Our Pseudorandom Quantum State (PRS) Generator and its Properties}
	Recall the definition of a $\prs$ (see Definition~\ref{def:prs}) and of a $\qprf$ (Defintion~\ref{def:qprf}).
	We present our construction of a $\prs$ candidate with binary phase as follows.
	\begin{claim}
		If $F$ is a $\qprf$ then $\gbin^F$ (along with the key generation algorithm of $F$) is a secure $\prs$ generator.
	\end{claim}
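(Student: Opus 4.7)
The plan is to combine Theorem~\ref{main_Thm} with the quantum security of $F$ via a straightforward hybrid/reduction argument. First, I would note that by Claim~\ref{generation_claim} and the efficiency of $F$, the pair $(K,\gbin^F)$ is a quantum polynomial-time algorithm pair satisfying the key-generation and state-generation syntactic requirements of Definition~\ref{def:prs}. It thus remains to establish the security clause.

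Fix a polynomial $t(\cdot)$ and a non-uniform efficient quantum distinguisher $A$. Consider the hybrid experiment in which $F_k$ is replaced by a truly random function $f:\binset^n\to\binset$, producing the state $\ket{\phi_f} := 2^{-n/2}\sum_x(-1)^{f(x)}\ket{x}$. Theorem~\ref{main_Thm} says precisely that the distribution over $\ket{\phi_f}$ is a $\tfrac{4t(n)^2}{2^n}$-approximate $t(n)$-design, so by the definition of approximate designs,
\[
\td\!\left( \bbE_f\bigl[(\ket{\phi_f}\bra{\phi_f})^{\otimes t(n)}\bigr],\; \rho_{(t(n),n,H)} \right) \leq \frac{4t(n)^2}{2^n} = \negl(n).
\]
Since trace distance upper-bounds the distinguishing advantage of any (even unbounded) quantum algorithm, $A$'s advantage between the random-function hybrid and the Haar distribution is negligible.

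Second, I would reduce distinguishing the PRS from the random-function hybrid to breaking the $\qprf$ security of $F$. Construct a $\qprf$ distinguisher $B$ with quantum oracle access to $g:\binset^n\to\binset^n$ (which is either $F_k$ or a uniformly random function). $B$ samples $t(n)$ copies of the state $\ket{\phi_g^{\text{bin}}} := 2^{-n/2}\sum_x(-1)^{g(x)_1}\ket{x}$ (using the least significant output bit, which is uniform and pairwise-unrelated across inputs when $g$ is random) by the recipe of Claim~\ref{generation_claim}: prepare $\ket{+}^{\otimes n}\ket{-}$ and apply the oracle $U_g:\ket{x}\ket{y}\mapsto\ket{x}\ket{y\oplus g(x)_1}$, once per copy. $B$ then runs $A$ on the $t(n)$ copies and outputs whatever $A$ outputs. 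When $g=F_k$, $B$ simulates the PRS game exactly; when $g$ is random, $B$ simulates the random-function hybrid exactly (up to the harmless projection to a single output bit, which preserves uniformity). Hence by Definition~\ref{def:qprf} the gap between the two is $\negl(n)$. Combining with the previous bound via the triangle inequality yields the required negligible distinguishing advantage.

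The only mild subtlety, and arguably the main point to double-check, is that the reduction $B$ genuinely fits the interface of Definition~\ref{def:qprf}: it needs only standard quantum oracle access to $g$ in order to realize $U_g$ on a uniform-superposition input, and it runs in polynomial time since $t(n)$ is polynomial and each state preparation costs one oracle call plus $O(n)$ Hadamards. No other subtlety arises because $\gbin^F$ does not require superposition queries beyond the single uniform superposition already handled by a standard $\qprf$ oracle. (As noted in Observation~1 in the introduction, this in fact shows that a weaker ``uniform-superposition-only'' pseudorandomness property of $F$ would already suffice, but full $\qprf$ security is certainly enough here.)
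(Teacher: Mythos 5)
Your proposal is correct and follows essentially the same route as the paper: replace $F_k$ by a truly random function via a reduction to $\qprf$ security, invoke Theorem~\ref{main_Thm} to get statistical (hence computational) closeness of the random binary phase state to Haar, and conclude by the triangle inequality. You merely spell out the reduction $B$ in more detail than the paper does (including the harmless truncation of the $\qprf$'s $n$-bit output to one bit), which is a reasonable elaboration rather than a different approach.
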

	\begin{proof}
		First, it's clear that the key generation algorithm $K$ of our PRS is the key generation algorithm of $F$ (that for input $1^n$, samples $k \gets \Keys_n$), and that the state generation algorithm $G$ of our PRS is $\gbin^F$.
		
		Now, we argue that by the quantum-security of $F$, for any polynomial number of copies $t(n)$, the distribution $\ket{\phi_k}_{k \gets \Keys_n}$ is computationally indistinguishable (by quantum adversaries) from a random binary phase state, that is, the distribution over $n$-qubit quantum states defined by 
		\begin{align*}
		2^{-n/2}\sum_{x\in \binset^n} (-1)^{f(x)}\ket{x} \enspace ,
		\end{align*}
		where $f: \binset^n \to \binset$ is a truly random function.
		
		By Theorem~\ref{main_Thm}, a random binary phase state is an ARS (Definition~\ref{def:ars}), which in particular means that a random Haar state and a random binary phase state are computationally indistinguishable for any polynomial number of copies.
		By the the triangle inequality of computational indistinguishability, we deduce that for any polynomial number of copies, the quantum distribution $\ket{\phi_k}_{k \gets \Keys_n}$ and the Haar distribution are computationally indistinguishable, which completes our proof.
	\end{proof}
	
	\begin{remark}
		We note that in our security proof we did not use the full power of quantumly secure PRFs. Indeed, it is sufficient to construct a PRF whose security only holds with respect to uniform superposition queries $\ket{+}^n$. This can be thought of as a quantum analog of the classical notion of weak PRFs \cite{naor1999synthesizers}. In the classical setting, it is conjectured that weak PRFs reside in a lower complexity class than full fledged PRFs \cite{akavia2014candidate}. If similar behavior can be shown in the quantum case it could improve the efficiency of PRS constructions.
		
		We leave the investigation of this new notion (which we propose to call quantumly weak PRFs) to future works.
	\end{remark}
	
	We conclude with observing that by our result, the complexity of PRSs is no greater than that of QPRFs, and is moreover implementable by $\HT$ circuits.
	\begin{corollary}
		Let $\PRF = \{ \PRF_n \}_{n \in \bbN}$ be a $\qprf$.
		Thus there is a $\prs$ generator construction $(K, G)$ implemented by $\HT$ circuits, where $K$ is implemented by circuits of the same size and depth as that of the key sampling algorithm of $\PRF$, and $G$ is implemented by circuits of the same size and depth (up to asymptotics) as that of $\PRF$.
	\end{corollary}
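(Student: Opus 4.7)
The plan is to simply combine the PRS security claim just proved with the earlier \hyperref[generation_claim]{generation claim} (Claim~\ref{generation_claim}) in a straightforward manner, so the ``proof'' is essentially bookkeeping of complexity.

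First I would define $(K,G)$ to be exactly the $\prs$ generator from the preceding claim instantiated with $F = \PRF$: that is, $K$ is the key-sampling algorithm of $\PRF$ (on input $1^n$, sample $k \gets \Keys_n$), and $G := \gbin^{\PRF}$. By the previous claim, this $(K,G)$ is a secure $\prs$ generator, so it only remains to check that the two algorithms satisfy the circuit complexity and structural requirements claimed.

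For $K$: since $K$ is literally the key sampling algorithm of $\PRF$, there is nothing to show about its size and depth. The only subtle point is that we want $K$ to be realized as an $\HT$ circuit. I would observe that key sampling for any efficient randomized algorithm can be implemented by first applying Hadamard gates to a batch of ancillary $\ket{0}$ qubits to obtain a uniform random string, then running the classical deterministic part as a sequence of Toffoli gates (using the Toffoli-universality proposition stated in the preliminaries). This preserves size and depth up to asymptotics.

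For $G = \gbin^{\PRF}$: I would invoke Claim~\ref{generation_claim} directly. Since $\PRF$, as an efficiently computable classical function, admits a classical circuit of some size $s(n)$ and depth $d(n)$, the claim immediately gives that $\gbin^{\PRF}$ is computable by an $\HT$ circuit of size $O(s(n))$ and depth $d(n)+1$, matching the size and depth of $\PRF$ up to asymptotics. The main (and only) subtlety worth flagging is that the claim requires a classical-circuit description of $\PRF$ acting on $(k,x) \mapsto \PRF_n(k,x)$ with $k$ hardwired (or supplied from $K$'s register), so I would remark that the key can be passed through the $\HT$ circuit as an additional classical input register without changing the asymptotic size or depth. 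There is no real obstacle here: the corollary is a direct packaging of the earlier claim and the $\HT$-circuit realization of classical computation, and the only thing one must be careful about is correctly accounting for the key register in the size/depth count.
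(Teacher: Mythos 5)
Your proposal is correct and matches the paper's intent exactly: the paper states this corollary without proof as an immediate packaging of the preceding security claim (that $\gbin^{\PRF}$ with the $\PRF$ key sampler is a secure $\prs$ generator) together with Claim~\ref{generation_claim} (the $\HT$-circuit implementation of $\gbin^F$ with size $O(s(n))$ and depth $d(n)+1$). Your additional remarks about realizing $K$ as an $\HT$ circuit and routing the key register are reasonable bookkeeping that the paper leaves implicit.
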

	
	\subsection{Shallow-Circuit Approximate $t$-Design Generators}
	We note that by a simple observation, we can replace the truly random function $f$ in Theorem \ref{main_Thm} with a $2t$-wise independent function to gain an elementary and efficient construction of quantum state approximate $t$-designs.
	Formally, we use the following fact.
	\begin{fact}[\cite{zhandry2012construct}, Fact 2]
		The behavior of any quantum algorithm making at most $q$ queries to a $2q$-wise independent function is identical to its behavior when the queries are made to a random function.
	\end{fact}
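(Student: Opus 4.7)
The plan is to use a polynomial-method argument: track how the amplitudes of the algorithm's state depend on the truth table of $f$, show that any measurement outcome probability is a polynomial of degree at most $2q$ in those truth-table entries, and then invoke $(2q)$-wise independence to match the expectation under a truly random function.

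First, I would introduce the indicator variables $P_{x,z} := \mathbf{1}[f(x) = z]$ for $x,z \in \binset^n$, which satisfy $\sum_z P_{x,z} = 1$ and $P_{x,z} P_{x,z'} = 0$ whenever $z \neq z'$. The standard quantum oracle then acts as $O_f \ket{x}\ket{y} = \sum_{z} P_{x,z} \ket{x}\ket{y \oplus z}$. A general $q$-query quantum algorithm consists of fixed oracle-independent unitaries $U_0, U_1, \ldots, U_q$ interleaved with the $q$ calls to $O_f$, starting from a fixed initial state and ending with a measurement (which we may defer to the end by the principle of deferred measurement). By a straightforward induction on the number of queries made, the amplitude of each computational basis state in the final state is a polynomial of degree at most $q$ in the $\{P_{x,z}\}$ variables, since each oracle call contributes exactly one linear factor in these variables and the interleaving unitaries act linearly on amplitudes.

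Next, the probability of observing any fixed measurement outcome is the squared absolute value of the corresponding amplitude, which is a polynomial of degree at most $2q$ in the $\{P_{x,z}\}$ (the indicators are real-valued, so complex conjugation does not alter the degree). Expanding this polynomial as a sum of monomials and using $P_{x,z}^2 = P_{x,z}$ together with the orthogonality $P_{x,z} P_{x,z'} = 0$ for $z \neq z'$, each surviving monomial has the form $\prod_{i=1}^{k} P_{x_i, z_i}$ with $k \le 2q$ and distinct input coordinates $x_1, \ldots, x_k$. Hence the expectation of such a monomial depends only on the joint distribution of $f$ restricted to some set of at most $2q$ inputs. By the definition of $(2q)$-wise independence, this joint distribution is exactly uniform, matching that of a truly random function; consequently the expected probability of every measurement outcome agrees exactly between the two experiments, and the output distributions of the algorithm coincide.

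The main technical point is the combinatorial bookkeeping in the previous paragraph, namely the claim that any monomial surviving in the squared amplitude involves at most $2q$ distinct input coordinates; this is where the specific structure of the $P_{x,z}$ indicators (idempotence and mutual exclusion across $z$ for fixed $x$) is essential. Once this is in place, the rest is routine linearity of expectation, and the conclusion is an exact (not merely statistical) identity of output distributions, which is what the fact asserts.
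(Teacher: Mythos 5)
Your argument is correct, and it is essentially the standard polynomial-method proof of this statement: the paper itself gives no proof, importing the fact directly from \cite{zhandry2012construct}, where it is established by exactly this reasoning (each query contributes one linear factor in the indicators $P_{x,z}$, so outcome probabilities are degree-$2q$ polynomials whose monomials, after idempotence and mutual exclusion, involve at most $2q$ distinct inputs, on which the $(2q)$-wise independent and uniform distributions agree). Nothing further is needed.
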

	This implies that when $f$ is a $2t$-wise independent function, then the state from Theorem \ref{main_Thm} is a $\frac{4t^2}{2^n}$-approximate $t$-design.
	We note that this observation can also be applied to the ARS from \cite{JLS18}, and it would imply a different (but seemingly less efficient) construction of $t$-designs.
	
	\begin{corollary}
		The distribution over $n$-qubit quantum states defined by 
		\begin{align*}
		2^{-n/2}\sum_{x\in \binset^n} (-1)^{f(x)}\ket{x}
		\end{align*} where $f: \binset^n \to \binset$ is a $2t$-wise independent function, is a $\frac{4t^2}{2^n}$-approximate $t$-design.
	\end{corollary}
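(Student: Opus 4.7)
The plan is to reduce directly to Theorem~\ref{main_Thm}, using the fact that the density matrix of the $t$-copy mixed state depends on $f$ only through its values at a set of size at most $2t$.

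First, I would write out the entries of the $t$-copy density matrix explicitly. For a fixed function $f$, let $\ket{\phi_f} := 2^{-n/2}\sum_x (-1)^{f(x)}\ket{x}$. Then
\begin{align*}
\bra{y_1,\ldots,y_t}(\ket{\phi_f}\bra{\phi_f})^{\otimes t}\ket{x_1,\ldots,x_t} \;=\; 2^{-tn}\,(-1)^{f(x_1)\oplus \cdots \oplus f(x_t)\oplus f(y_1)\oplus \cdots \oplus f(y_t)}.
\end{align*}
The key observation is that each such entry is a function of $f$ restricted to the set $\{x_1,\ldots,x_t,y_1,\ldots,y_t\}$, which has cardinality at most $2t$.

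Next, I would take the expectation of this entry over the random choice of $f$. For a truly random $f$, a standard computation yields $1$ when every element of $\{x_i\}\cup\{y_j\}$ appears an even number of times (counting multiplicities), and $0$ otherwise. The point is that this expectation is determined purely by the joint distribution of $f$ on the (at most $2t$) distinct inputs involved. By Definition~\ref{def:t-wise}, a $2t$-wise independent $f$ induces exactly the uniform joint distribution on any such set of inputs, so the expectation agrees with that of a truly random $f$ entry by entry. Consequently,
\begin{align*}
\bbE_{f\ 2t\text{-wise}}\big[(\ket{\phi_f}\bra{\phi_f})^{\otimes t}\big] \;=\; \bbE_{f\ \text{random}}\big[(\ket{\phi_f}\bra{\phi_f})^{\otimes t}\big].
\end{align*}

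Finally, I would invoke Theorem~\ref{main_Thm}, which states that the right-hand side is within trace distance $4t^2/2^n$ of $\rho_{(t,n,H)}$. Since the left-hand side is identical as a matrix, the same trace-distance bound holds, establishing the corollary. Alternatively, one can phrase the entry-equality step via the cited Zhandry fact by viewing the computation of $\rho^f_t$ as making $2t$ (nonadaptive) queries to $f$. There is essentially no obstacle here: all of the analytic work is carried by Theorem~\ref{main_Thm}, and the only thing to verify carefully is the bookkeeping that at most $2t$ distinct inputs appear in any density-matrix entry (which handles the case where some $x_i$ or $y_j$ coincide without requiring a stronger independence assumption).
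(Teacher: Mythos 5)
Your proposal is correct and follows essentially the same route as the paper: establish that the $t$-copy density matrix is unchanged when the truly random $f$ is replaced by a $2t$-wise independent one, then invoke Theorem~\ref{main_Thm}. The only difference is that the paper gets the matrix equality by citing Zhandry's fact as a black box, whereas you verify it directly entry by entry (each entry depends on $f$ only through at most $2t$ distinct inputs), which is a self-contained proof of the special case of that fact needed here.
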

	
	More explicitly, combining the above with Claim~\ref{generation_claim} implies that that when $f$ is a $2t$-wise independent function, $\gbin^{f}$ is an approximate $t$-design generator (along with the key generation algorithm of $f$). The following corollary relates the complexity of $t$-design generators with that of the $2t$-wise independent functions.
	
	\begin{corollary}
		Let $t(n) : \bbN \rightarrow \bbN$ be a function and let $f = \{ f_n \}_{n \in \bbN}$, $f_n : \Keys_n \times \binset^n \rightarrow \binset$ be a $(2t(n))$-wise independent function.
		Thus there is an $\frac{O(t(n)^2)}{2^n}$-approximate quantum state $t(n)$-design generator $(K, G)$ implemented by $\HT$ circuits, where $K$ is implemented by circuits of the same size and depth as that of the key sampling algorithm of $f$, and $G$ is implemented by circuits of the same size and depth (up to asymptotics) as that of $f$.
	\end{corollary}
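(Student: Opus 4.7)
The proof is essentially an assembly: the heavy lifting has already been done by the preceding corollary (which gives the statistical guarantee) and by Claim~\ref{generation_claim} (which gives the circuit structure). The plan is to explicitly name the pair $(K,G)$ and verify each of the three required properties in turn.

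First, I would define $K$ to be exactly the key sampling algorithm of $f$, which on input $1^n$ returns a sample $k \gets \Keys_n$. By definition this inherits the circuit size and depth of the key sampling algorithm of $f$, handling the claim about $K$ immediately. I would then set $G := \gbin^{f}$, so that on inputs $(1^n, k)$ it outputs the pure state $\ket{\phi_k} = 2^{-n/2}\sum_{x\in\binset^n}(-1)^{f_n(k,x)}\ket{x}$; this automatically satisfies the Key Generation and State Generation items in the definition of an approximate $t$-design generator.

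Next I would verify the approximate randomness property. Since $f$ is $(2t(n))$-wise independent, the preceding corollary (obtained by combining Theorem~\ref{main_Thm} with the fact from \cite{zhandry2012construct} that quantum $t$-query access cannot distinguish a $(2t)$-wise independent function from a truly random one) directly yields
\[
\td\bigl(\bbE_{k \gets K(1^n)}[(\ket{\phi_k}\bra{\phi_k})^{\otimes t(n)}],\ \rho_{(t(n), n, H)}\bigr) \leq \frac{4t(n)^2}{2^n},
\]
which is the desired $O(t(n)^2)/2^n$ bound. At this stage $(K,G)$ is certified as an approximate $t(n)$-design generator with the stated error.

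Finally I would invoke Claim~\ref{generation_claim}. If $f_n$ is computable by a classical circuit of size $s(n)$ and depth $d(n)$, then $\gbin^{f}$ is computable by an $\HT$ circuit of size $O(s(n))$ and depth $d(n)+1$. Since the extra additive $+1$ in depth and the $O(\cdot)$ in size are swallowed by the ``up to asymptotics'' qualifier, $G$ has the same size and depth as $f$ up to asymptotics, as required. There is no real obstacle here; the only thing to be slightly careful about is that the claim about $K$ refers to the key sampling algorithm of $f$ and not to $f$ itself, so one should keep the two complexity statements separate when citing Claim~\ref{generation_claim}.
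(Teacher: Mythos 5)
Your proposal is correct and matches the paper's (implicit) argument exactly: the paper states this corollary as an immediate consequence of combining the preceding corollary (binary phase states with a $(2t)$-wise independent $f$ form a $\frac{4t^2}{2^n}$-approximate $t$-design) with Claim~\ref{generation_claim} on the $\HT$ circuit complexity of $\gbin^{f}$. Your explicit identification of $K$ with the key sampler and $G$ with $\gbin^{f}$, and your careful separation of the two complexity claims, is precisely what the paper intends.
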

	
	Finally, we can instantiate with known construction of $k$-wise independent functions to obtain the following.
	\begin{corollary}
		For every function $t(n) : \bbN \rightarrow \bbN$, there exists a $\frac{O(t(n)^2)}{2^n}$-approximate quantum state $t(n)$-design generator, implemented by $\HT$ circuits of $\poly(t(n), n)$ size and $O(\log t(n)\cdot\log n )$ depth.
	\end{corollary}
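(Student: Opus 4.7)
The plan is to apply the preceding corollary -- which reduces the complexity of a $t(n)$-design generator (implemented as an $\HT$ circuit) to that of any $(2t(n))$-wise independent function $f_n : \Keys_n \times \binset^n \to \binset$ -- to a standard parallel construction of $(2t)$-wise independent hashing. Concretely, I would identify $\binset^n$ with $\bbF_{2^n}$, sample a key $k = (a_0, \ldots, a_{2t-1}) \in \bbF_{2^n}^{2t}$ by reading $2t \cdot n$ uniform random bits, and define
\[
f_n(k, x) \;:=\; \mathrm{lsb}\!\left( \sum_{i=0}^{2t-1} a_i x^i \right) \in \binset \enspace .
\]
A textbook argument (polynomial interpolation over a field) shows that the full polynomial evaluation $x \mapsto \sum_i a_i x^i$ is $(2t)$-wise independent as a map into $\bbF_{2^n}$, and composing with any fixed $\bbF_2$-linear projection -- such as the least significant bit -- preserves $(2t)$-wise independence into $\binset$.

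The core task is then to verify that $f_n$ admits a classical circuit of size $\poly(t, n)$ and depth $O(\log t \cdot \log n)$. First, I would compute all powers $x^0, x^1, \ldots, x^{2t-1}$ in parallel: the powers of the form $x^{2^j}$ are produced by $O(\log t)$ layers of iterated squaring, and the remaining powers are then obtained from their binary expansions using one more balanced tree of $O(\log t)$ depth. The products $a_i x^i$ are formed in one parallel layer, and their sum is taken by a balanced XOR-tree of depth $O(\log t)$. Each individual $\bbF_{2^n}$ multiplication is a fixed bilinear map on $2n$ input bits, hence admits a classical circuit of size $\poly(n)$ and depth $O(\log n)$ by standard parallel algorithms (e.g., bit-parallel schoolbook multiplication followed by depth-$O(\log n)$ reduction modulo the defining polynomial of $\bbF_{2^n}$, using parallel-prefix addition for carry-free summation). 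Field addition is a single layer of XOR gates. Multiplying the two depth bounds yields total depth $O(\log t) \cdot O(\log n) = O(\log t \cdot \log n)$, with a total gate count polynomial in $t$ and $n$.

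Feeding this $f$ into the preceding corollary yields an $\frac{O(t(n)^2)}{2^n}$-approximate $t(n)$-design generator $(K, G)$ in which $K$ samples the $2tn$-bit key (depth $O(1)$, size $O(tn)$) and $G = \gbin^{f}$ is the $\HT$ circuit guaranteed by Claim~\ref{generation_claim}, inheriting the size and depth of $f$ up to asymptotics and hence achieving the claimed $\poly(t(n), n)$ size and $O(\log t(n) \cdot \log n)$ depth. There is no conceptual obstacle here beyond bookkeeping; the only point that genuinely must be double-checked is that the reduction step after each $\bbF_{2^n}$ multiplication (which is a fixed linear map on $2n-1$ bits) really does fit into depth $O(\log n)$, which is standard but is what makes the two $\log$ factors combine multiplicatively rather than leaving some extra $\polylog$ overhead.
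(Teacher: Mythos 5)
Your proposal is correct and follows essentially the same route as the paper: instantiate the $(2t)$-wise independent function by evaluating a random degree-$(2t-1)$ polynomial over $\bbF_{2^n}$, observe this has circuits of size $\poly(t,n)$ and depth $O(\log t \cdot \log n)$, and plug it into the preceding corollary. The paper states the circuit bounds for polynomial evaluation without elaboration, whereas you spell out the parallel implementation (iterated squaring, balanced multiplication and XOR trees, $O(\log n)$-depth field arithmetic) and the projection to a single output bit; these details are correct and only make explicit what the paper leaves implicit.
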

	\begin{proof}
		We recall the most elementary construction of $k$-wise independent distributions over $2^n$ variables. Consider the field $\bbF = \bbF_{2^n}$ and recall that $\bbF$ elements correspond to degree $(n-1)$ formal polynomials with binary coefficients. Thus there is a natural bijection between $\bbF$ and $\binset^n$ that allows to represent $\bbF$ elements as elements in $\binset^n$. This representation allows to perform field arithmetic operations using circuits of size $\poly(n)$ and depth $O(\log n)$.
		
		A $k$-wise independent distribution over $\bbF^\bbF$ is defined by the evaluations of a random degree $(k-1)$ polynomial over $\bbF$, on all elements in $\bbF$. The computational complexity of evaluating such a polynomial is $\poly(k,n)$ and its depth is $O(\log k \cdot \log n)$. Plugging in $k=2t$ completes the proof (note that we only require $2t$-wise independence over $\binset^\bbF$ so our instantiation is actually a slight overkill).
	\end{proof}

	\newpage
	\section{Proof of Theorem~\ref{main_Thm}}
	\label{sec:conjproof}
	We introduce the following notation.
	
	\begin{notation}[Complex phase state by $f$]
		For a function $f : \binset^n \rightarrow [ 2^n ]$ we denote
		$$
		\ket{f}_{(2^n)} := 2^{-n/2}\sum_{x \in \binset^n} \omega_{2^n}^{f(x)} \ket{x} \enspace .
		$$
		when it is clear from the context, the subscript $2^n$ will be dropped from $\ket{f}_{(2^n)}$.
	\end{notation}
	\begin{notation}[Binary phase state by $f$]
		For a function $f : \binset^n \rightarrow \binset$ we denote
		$$
		\ket{f}_{(2)} := 2^{-n/2}\sum_{x \in \binset^n} (-1)^{f(x)} \ket{x} \enspace .
		$$
		when it is clear from the context, the subscript $2$ will be dropped from $\ket{f}_{(2)}$.
	\end{notation}
	\begin{notation}[$t$-copy random complex phase mixed state]
		For $t, n \in \bbN$, denote
		$$
		\rho_{ ( t, n, 2^n ) } := \bbE_{f}[ (\ket{f}_{(2^n)}\bra{f}_{(2^n)})^{\otimes t} ] \enspace ,
		$$
		where the expectation is taken over a uniformly random function $f : \binset^n \rightarrow [ 2^n ]$.
	\end{notation}
	\begin{notation}[$t$-copy random binary phase mixed state]
		For $t, n \in \bbN$, denote
		$$
		\rho_{ ( t, n, 2 ) } := \bbE_{f}[ (\ket{f}_{(2)}\bra{f}_{(2)})^{\otimes t} ] \enspace ,
		$$
		where the expectation is taken over a uniformly random function $f : \binset^n \rightarrow \binset$.
	\end{notation}
	
	In \cite{JLS18} It is shown that the random complex phase state is an ARS.
	\begin{lemma}[\cite{JLS18}, Lemma 2]\label{Thm_H}
		Let $n, t \in \bbN$, then
		$$
		\td\big( \rho_{ ( t, n, 2^n ) }, \rho_{( t, n, H )} \big) = \prod_{i \in [t-1]}\Bigg( 1 - \frac{i}{2^n} \Bigg) - \prod_{i \in [t-1]}\Bigg( 1 - \frac{2\cdot i}{2^n + i} \Bigg) \enspace .
		$$
	\end{lemma}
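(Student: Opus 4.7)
The plan is to simultaneously diagonalize $\rho_{(t,n,2^n)}$ and $\rho_{(t,n,H)}$ inside the symmetric subspace of $(\bbC^{2^n})^{\otimes t}$, reducing the trace distance to a total variation distance between two explicit distributions over ``types''. For each multiset $T$ of size $t$ drawn from $[2^n]$, let $|T|$ denote the number of ordered $t$-tuples $\mathbf{x}$ whose underlying multiset equals $T$, and set $\ket{T}_{\mathrm{sym}} := |T|^{-1/2}\sum_{\mathbf{x} \in T}\ket{\mathbf{x}}$. These vectors form an orthonormal basis of the symmetric subspace, whose dimension is $d_{\mathrm{sym}} = \binom{2^n+t-1}{t}$.

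First I would compute $\rho_{(t,n,2^n)}$ by expanding $\ket{f}\bra{f}^{\otimes t}$ in the computational basis and averaging over $f$: the $(\mathbf{x},\mathbf{y})$ entry equals $2^{-tn}\,\bbE_f\bigl[\omega_{2^n}^{\sum_i f(x_i)-\sum_i f(y_i)}\bigr]$, which vanishes unless $m_{\mathbf{x}}(z) \equiv m_{\mathbf{y}}(z)\pmod{2^n}$ for every $z\in\binset^n$, and equals $2^{-tn}$ otherwise. In the regime $t\le 2^n$ this congruence coincides with permutation equivalence, so the matrix block-diagonalizes by type, each $T$-block being $2^{-tn}$ times an $|T|\times|T|$ all-ones matrix whose unique nonzero eigenvalue is $|T|/2^{tn}$ with eigenvector $\ket{T}_{\mathrm{sym}}$. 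Combined with the standard identity $\rho_{(t,n,H)} = d_{\mathrm{sym}}^{-1}\Pi_{\mathrm{sym}}$, this shows both matrices are diagonal in the basis $\{\ket{T}_{\mathrm{sym}}\}$, with difference eigenvalues $c_T := |T|/2^{tn} - d_{\mathrm{sym}}^{-1}$.

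The trace distance is then $\tfrac{1}{2}\sum_T |c_T|$, i.e.\ the total variation distance between the distribution $T\mapsto |T|/2^{tn}$ (the induced type of an i.i.d.\ uniform $t$-tuple) and the uniform distribution on types. Since $\tr(\rho_{(t,n,2^n)}-\rho_{(t,n,H)})=0$, we have $\sum_T c_T = 0$ and hence $\td = \sum_{T:c_T>0} c_T$. For a ``distinct'' type $T$ (all entries pairwise different) $|T|=t!$ and $c_T\ge 0$, since $\prod_{i=0}^{t-1}(2^n+i)\ge 2^{tn}$; summing over the $\binom{2^n}{t}$ distinct types and a short algebraic manipulation yields exactly $\prod_{i=1}^{t-1}(1-i/2^n) - \prod_{i=1}^{t-1}\bigl(1-2i/(2^n+i)\bigr)$, matching the claim. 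The main obstacle is verifying that the positive part of the spectrum is carried entirely by distinct types, i.e.\ that $c_T\le 0$ for every non-distinct $T$. This reduces to the inequality $\prod_{i=0}^{t-1}(1+i/2^n)\le\prod_z m_{\mathbf{x}}(z)!$ for every non-distinct multiplicity profile, which is clean in the asymptotic regime $t^2\ll 2^n$ targeted by JLS (the left side stays close to $1$ while the right side is at least $2$), and must be handled with more care for larger $t$.
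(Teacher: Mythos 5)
The paper does not actually prove this lemma---it is imported verbatim from \cite{JLS18}---so there is no internal proof to compare against; your reconstruction follows what is essentially the JLS argument (block-diagonalization by type, reduction to a total-variation distance between the i.i.d.\ type distribution and the uniform distribution on types), and everything up to your final step is correct. Two small notes: the reduction from the mod-$2^n$ histogram condition to permutation equivalence needs $t<2^n$ strictly (at $t=2^n$ the all-$z$ and all-$z'$ tuples become equivalent without being permutations), and your algebra for $\sum_{T\ \mathrm{distinct}}c_T$ does indeed reproduce the claimed right-hand side.

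The step you flagged is, however, not a loose end you could tighten with ``more care''---it is a genuine failure of the statement as written. The condition $c_T\le 0$ for a type with exactly one repeated element reads $\prod_{i=1}^{t-1}(1+i/2^n)\le 2$, which holds only for $t\lesssim 2^{(n+1)/2}$; beyond that, such types migrate into the positive part of the spectrum and the equality breaks. Concretely, for $n=3$, $t=5$: a type with one doubled element has $p_T=60/8^5\approx 1.83\times 10^{-3}>q_T=1/\binom{12}{5}\approx 1.26\times 10^{-3}$, and summing the positive part over all $56$ distinct types and $280$ single-double types gives $\td\approx 0.2935$, whereas the formula $\prod_{i=1}^{4}(1-i/8)-\prod_{i=1}^{4}(1-2i/(8+i))\approx 0.1344$. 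So the lemma is false for general $n,t$ and must carry an implicit restriction (roughly $t=O(2^{n/2})$, i.e.\ exactly the regime where the paper's final bound $4t^2/2^n$ is non-vacuous, so the main theorem is unaffected). Your proof is therefore complete and correct precisely in the regime where the statement is true, and your instinct that the non-distinct types are the crux was exactly right.
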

	
	We will show that a random binary phase state is asymptotically statistically close to a random complex phase state. More precisely, we will prove the following lemma.
	\begin{lemma}\label{Th_1}
		Let $n, t \in \bbN$, then
		$$
		\td\big( \rho_{(t, n, 2)}, \rho_{(t, n, 2^n)} \big) \leq \prod_{i \in [t-1]}\Bigg( 1 + \frac{i}{2^n} \Bigg) - \prod_{i \in [t-1]}\Bigg( 1 - \frac{i}{2^n} \Bigg) \enspace .
		$$
	\end{lemma}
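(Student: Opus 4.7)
The plan is to express $M := \rho_{(t,n,2)} - \rho_{(t,n,2^n)}$ in block-diagonal form in the computational basis, analyze each block spectrally via its characteristic polynomial, and combine the per-block bounds using the trace-zero identity $\tr(M) = 0$. First, expanding both density matrices and averaging over the uniform random function gives the entrywise formula: $M_{\vec x,\vec y} = 2^{-tn}$ precisely when the mod-$2$ histograms of $\vec x$ and $\vec y$ agree but $\vec x$ is not a permutation of $\vec y$, and $M_{\vec x,\vec y} = 0$ otherwise (the permutation condition arising from $\rho_{(t,n,2^n)}$ because $t < 2^n$). Thus $M = \bigoplus_C M_C$ where $C$ ranges over mod-$2$ histogram equivalence classes; within $C$, let $P_1,\ldots,P_{k_C}$ be the permutation sub-classes with sizes $p_1,\ldots,p_{k_C}$, and introduce the orthonormal symmetric multiset states $\ket{s_i} := p_i^{-1/2}\sum_{\vec x\in P_i}\ket{\vec x}$.

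Working inside a block, I would show that $M_C$ annihilates the orthogonal complement of $\mathrm{span}\{\ket{s_i}\}_{i=1}^{k_C}$ (any vector supported on $C$ that sums to zero on each $P_i$ lies in $\ker M_C$), and that on this $k_C$-dimensional invariant subspace $M_C = (vv^\top - D^2)/2^{tn}$ with $v = (\sqrt{p_i})_i$ and $D = \mathrm{diag}(\sqrt{p_i})$. Its characteristic polynomial factors as $\prod_i(\lambda + p_i)\bigl(1 - \sum_i p_i/(\lambda + p_i)\bigr)$, and a sign/monotonicity analysis of the secular equation $\sum_i p_i/(\lambda + p_i) = 1$ pins down the spectrum: exactly one positive eigenvalue $a_C/2^{tn}$ together with $k_C - 1$ negative eigenvalues all in $(-p_{\max}^{(C)},0)/2^{tn}$. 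Combined with $\tr(M_C) = 0$, this yields $\|M_C\|_1 = 2a_C/2^{tn}$.

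Summing blocks, $\td(\rho_{(t,n,2)},\rho_{(t,n,2^n)}) = 2^{-tn}\sum_C a_C$. The secular equation immediately gives $a_C < |C|$ (since $1 = \sum_i p_i/(a_C + p_i) < |C|/a_C$), while trivial blocks with $k_C = 1$---precisely the all-distinct tuples, numbering $(2^n)_{(t)} := 2^n(2^n-1)\cdots(2^n-t+1)$---contribute $a_C = 0$. Hence $\sum_{C : k_C \geq 2}|C| = 2^{tn} - (2^n)_{(t)}$ and
\[
\td \;\leq\; 1 - \frac{(2^n)_{(t)}}{2^{tn}} \;=\; 1 - \prod_{i=1}^{t-1}\bigl(1 - \tfrac{i}{2^n}\bigr) \;\leq\; \prod_{i=1}^{t-1}\bigl(1 + \tfrac{i}{2^n}\bigr) - \prod_{i=1}^{t-1}\bigl(1 - \tfrac{i}{2^n}\bigr),
\]
the last step being $\prod_{i=1}^{t-1}(1+i/2^n) \geq 1$. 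The main obstacle I anticipate is the spectral analysis of each block: executing the sign analysis of the secular equation cleanly when some $p_i$'s coincide---coinciding $p_i$'s produce higher-order poles in the rational function, forcing one to track cancellations in the factor $\prod_i(\lambda+p_i)$ together with the accompanying eigenvalues at $-p_i$ that come from kernel directions within the coincident blocks, so that one still recovers exactly $k_C - 1$ negative eigenvalues and a single positive eigenvalue per block.
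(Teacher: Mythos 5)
Your proposal is correct, and although it starts from the same structural skeleton as the paper---the same entrywise formula for the difference matrix, the same decomposition into stabilization (mod-$2$ histogram) blocks, and in effect the same per-block characteristic polynomial $\prod_i(\lambda+p_i)\bigl(1-\sum_i p_i/(\lambda+p_i)\bigr)$, which the paper obtains by an explicit row/column reduction of $\rho_n-\lambda I$ to triangular form---it assembles the final bound along a genuinely different route. The paper proves two separate lemmas: a rank bound ${2^n+t-1 \choose t}-{2^n \choose t}$ on $\rho_n$ (counting permutation-class sentinels and discarding the all-distinct classes) and a lower bound $-t!/2^{tn}$ on every eigenvalue (a sign analysis of the secular factor for $\lambda<-t!/2^{tn}$), then multiplies the two using $\tr(\rho_n)=0$ globally. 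You instead exploit trace-zero \emph{per block}: each non-trivial block has exactly one positive eigenvalue $a_C/2^{tn}$, so $\norm{M_C}_1=2a_C/2^{tn}$, the secular equation gives $a_C<|C|$, and summing $|C|$ over non-trivial blocks reduces to counting all-distinct tuples. This bypasses the rank lemma and the $|P|\le t!$ bound entirely, yields the slightly stronger intermediate bound $1-\prod_{i\in[t-1]}(1-i/2^n)$ before relaxing to the stated one, and the only technical debt is the one you correctly flag---handling coincident $p_i$ in the secular analysis, which is standard (each value repeated $m$ times contributes $-p$ as a root of multiplicity $m-1$, and the remaining roots interlace the distinct $-p_{(j)}$, leaving exactly one positive root). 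In short: same decomposition and same characteristic polynomial, but you bound the unique positive eigenvalue of each block where the paper counts and bounds the negative ones; your accounting is arguably cleaner and marginally tighter.
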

	
	Using the triangle inequality of trace distance and Lemmas \ref{Thm_H} and \ref{Th_1} (below, in the first inequality), we show that a random binary phase state is an ARS:
	\begin{gather*}
	\td\big( \rho_{ ( t, n, 2 ) }, \rho_{( t, n, H )} \big) \leq
	\prod_{i \in [t-1]}\Bigg( 1 + \frac{i}{2^n} \Bigg) - \prod_{i \in [t-1]}\Bigg( 1 - \frac{2\cdot i}{2^n + i} \Bigg) \leq \\
	\Bigg( 1 + \frac{t}{2^n} \Bigg)^t - \Bigg( 1 - \frac{2\cdot t}{2^n + t} \Bigg)^t \leq
	\Bigg( 1 + \frac{t}{2^n} \Bigg)^t - \Bigg( 1 - \frac{2\cdot t}{2^n} \Bigg)^t \underset{(*)}{\leq} \\
	1 + \frac{2\cdot t^2}{2^n} - \Bigg( 1 - \frac{2\cdot t}{2^n} \Bigg)^t \underset{(**)}{\leq} 
	1 + \frac{2\cdot t^2}{2^n} - \Bigg(1 - \frac{2\cdot t^2}{2^n} \Bigg) = 
	\frac{4\cdot t^2}{2^n} \enspace ,
	\end{gather*}
	where $(*)$ is due to one variant of Bernoulli's inequality ($\forall r > 1, x \in [0, \frac{1}{2(r - 1)}) : (1 + x)^r \leq 1 + 2rx$), and $(**)$ follows from the more popular variant of Bernoulli's inequality ($\forall r \notin (0, 1), x \geq -1 : (1 + x)^r \geq 1 + rx$).

	\bigskip
	
	\par\noindent Therefore, all that remains is to prove Lemma~\ref{Th_1}, which will require most technical effort.

	\subsection{Proof of Lemma \ref{Th_1}}
	
	Denote the difference matrix $\rho_n := \rho_{(t, n, 2)} - \rho_{(t, n, 2^n)}$. The proof of the lemma contains two main components.
	First, an upper bound on the number of non-zero eigenvalues of $\rho_n$.
	\begin{lemma}\label{Lem_2}
		Let $n \in \bbN$ and let $t \in \{ 1, 2, \cdots, 2^n - 1 \}$, thus the number of non-zero eigenvalues of $\rho_n$ is upper bounded by 
		$$
		{2^n + t - 1 \choose t} - {2^n \choose t} \enspace .
		$$
	\end{lemma}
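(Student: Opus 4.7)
The plan is to decompose $\rho_n$ into combinatorial blocks indexed by modulo-$2$ histogram equivalence classes, and to bound the rank of each block separately. For every permutation equivalence class $P$ (i.e., a multiset of size $t$ from $\binset^n$), define $w_P := \sum_{\vec x \in P}\ket{\vec x}$, and for every modulo-$2$ histogram class $C$ define $v_C := \sum_{\vec x \in C}\ket{\vec x}$; note $v_C = \sum_{P \subseteq C} w_P$ since each permutation class is contained in a unique modulo-$2$ class. Computing entries directly (the nonzero entries of $\rho_{(t,n,2^n)}$ and $\rho_{(t,n,2)}$ equal $2^{-tn}$ exactly on pairs of tuples whose histograms agree modulo $2^n$, respectively modulo $2$) yields
\begin{gather*}
\rho_{(t,n,2^n)} = 2^{-tn}\sum_P w_P w_P^\dagger, \qquad \rho_{(t,n,2)} = 2^{-tn}\sum_C v_C v_C^\dagger, \\
\rho_n = 2^{-tn}\sum_C M_C, \qquad M_C := v_C v_C^\dagger - \sum_{P \subseteq C} w_P w_P^\dagger.
\end{gather*}
Since every tuple lies in a unique modulo-$2$ class, the blocks $M_C$ act on pairwise orthogonal subspaces, so $\mathrm{rank}(\rho_n) = \sum_C \mathrm{rank}(M_C)$.

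Next I would compute the rank of each block. Let $k_C$ denote the number of permutation classes $P \subseteq C$. In the orthonormal basis $\{w_P/\sqrt{|P|}\}_{P \subseteq C}$ of the support of $M_C$, the $(P,P)$ diagonal entries vanish by exact cancellation between the two terms, while the $(P,P')$ off-diagonal entries equal $\sqrt{|P|\cdot|P'|}$. Setting $D := \mathrm{diag}(|P|)_{P \subseteq C}$ and letting $J$ be the $k_C \times k_C$ all-ones matrix, this gives the compact form $M_C = D^{1/2}(J - I)D^{1/2}$. Since $D^{1/2}$ is invertible and $J - I$ has eigenvalues $k_C - 1$ (once) and $-1$ (with multiplicity $k_C - 1$), it follows that $\mathrm{rank}(M_C) = k_C$ when $k_C \geq 2$ and $\mathrm{rank}(M_C) = 0$ when $k_C = 1$. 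This precise cancellation between $v_C v_C^\dagger$ and $\sum_P w_P w_P^\dagger$ is the technical crux; the naive bound $\mathrm{rank}(M_C) \leq \mathrm{rank}(v_C v_C^\dagger) + \mathrm{rank}(\sum_{P \subseteq C} w_P w_P^\dagger) = k_C + 1$ is too coarse to yield the lemma.

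Finally I would close with a counting argument. The sum $\sum_C k_C$ equals the total number of permutation classes, i.e., the number of multisets of size $t$ from $\binset^n$, which is ${2^n + t - 1 \choose t}$. A modulo-$2$ class $C$ is specified by a parity vector $(p_z)_{z \in \binset^n} \in \binset^{2^n}$ with $\sum_z p_z \leq t$ and $\sum_z p_z \equiv t \pmod 2$; the histograms in $C$ take the form $a_z = p_z + 2c_z$ with $c_z \geq 0$ and $\sum_z c_z = (t - \sum_z p_z)/2$. Hence $k_C = 1$ precisely when $\sum_z p_z = t$, i.e., when $p$ is the indicator of a $t$-subset of $\binset^n$, giving $|\{C : k_C = 1\}| = {2^n \choose t}$. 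Combining the pieces,
$$
\mathrm{rank}(\rho_n) = \sum_{C : k_C \geq 2} k_C = {2^n + t - 1 \choose t} - {2^n \choose t},
$$
which is the desired bound (in fact attained with equality).
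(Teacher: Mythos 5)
Your proposal is correct, and it reaches the stated bound by a genuinely different route than the paper. The paper works directly with the matrix of remote stabilizations (Lemma~\ref{rho_characterization}) and argues via elementary row operations: rows within a permutation class are identical, so after subtracting the sentinel row only sentinels can be non-zero, and the ${2^n \choose t}$ classes of $t$ distinct elements have no remote stabilizers at all, giving the bound as a count of non-zero rows. You instead write $\rho_{(t,n,2^n)}$ and $\rho_{(t,n,2)}$ as sums of rank-one outer products $w_P w_P^\dagger$ and $v_C v_C^\dagger$, observe that $\rho_n$ is block-diagonal over stabilization classes, and compute each block exactly as $M_C = D^{1/2}(J-I)D^{1/2}$ in the orthogonal basis $\{w_P\}$, so that $\mathrm{rank}(M_C)=k_C$ for $k_C\ge 2$ and $0$ for $k_C=1$; the final count $\sum_C k_C = {2^n+t-1\choose t}$ and $|\{C: k_C=1\}| = {2^n\choose t}$ matches the paper's. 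What your approach buys: it shows the bound holds with \emph{equality} (incidentally, the paper's parenthetical remark that constant tuples $(z,\dots,z)$ give additional zero rows is inaccurate for $t\ge 2$ --- such tuples do have remote stabilizers --- consistent with your equality), and the congruence $M_C \sim J-I$ immediately exhibits the inertia of each block (one positive, $k_C-1$ negative eigenvalues), so the same decomposition essentially subsumes the determinant computation the paper performs separately for Lemma~\ref{Lem_3}. The paper's row-operation argument is more elementary but yields only an upper bound and requires redoing the block analysis from scratch for the eigenvalue bound.
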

	Second, a lower bound on the minimal (as in most negative) eigenvalue of $\rho_n$.
	\begin{lemma}\label{Lem_3}
		Let $n \in \bbN$ and let $t \in \{ 1, 2, \cdots, 2^n - 1 \}$, thus for all eigenvalues $\lambda$ of $\rho_n$ we have $-\frac{t!}{2^{tn}} \leq \lambda$.
	\end{lemma}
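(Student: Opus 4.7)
The plan is to analyze $\rho_n := \rho_{(t,n,2)} - \rho_{(t,n,2^n)}$ by writing out its matrix entries explicitly and exploiting a natural block decomposition. Indexing rows and columns by tuples $\mathbf{x},\mathbf{y} \in (\binset^n)^t$, a direct computation of the expectation over the random function $f$ gives
\[
(\rho_{(t,n,2)})_{\mathbf{x},\mathbf{y}} = 2^{-tn}\,\mathbbm{1}\bigl[c_z(\mathbf{x},\mathbf{y})\text{ is even for every }z\bigr], \qquad (\rho_{(t,n,2^n)})_{\mathbf{x},\mathbf{y}} = 2^{-tn}\,\mathbbm{1}[\mathbf{x} \sim \mathbf{y}],
\]
where $c_z$ denotes the joint multiplicity of $z$ in the concatenation of $\mathbf{x}$ and $\mathbf{y}$, and $\sim$ is permutation equivalence. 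Since permutation equivalence already forces every $c_z$ to be even, $\rho_n$ has entry $2^{-tn}$ precisely when $\mathbf{x},\mathbf{y}$ share the same parity histogram but are not permutations of each other, and $0$ otherwise.

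Grouping tuples by their mod-$2$ histogram yields a block-diagonal decomposition of $\rho_n$, so it suffices to bound the eigenvalues of a single block. Fix such a block $B$ and let $P_1,\ldots,P_k$ be its permutation classes, of sizes $\pi_j = t!/\prod_z (c_z!)$. The restriction of $\rho_n$ to $B$ equals $2^{-tn}\bigl(J_B - \bigoplus_{j=1}^k J_{P_j}\bigr)$, where $J_S$ denotes the all-ones matrix on $S\times S$. Each row of this matrix is constant on every $P_j$, so its nonzero spectrum is supported on the $k$-dimensional subspace spanned by the indicator vectors $\{\mathbf{1}_{P_j}\}$. Passing to the orthonormal basis $e_j = \mathbf{1}_{P_j}/\sqrt{\pi_j}$ I would show that the block reduces to the $k\times k$ matrix
\[
\tilde A \;=\; uu^\top - D, \qquad u = (\sqrt{\pi_1},\ldots,\sqrt{\pi_k})^\top, \qquad D = \mathrm{diag}(\pi_1,\ldots,\pi_k),
\]
exhibiting it as a rank-one positive semidefinite perturbation of the negative diagonal matrix $-D$.

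From here the eigenvalue bound follows in one of two equivalent ways. Either Weyl's inequality for PSD rank-one perturbations immediately yields $\lambda_{\min}(\tilde A) \geq \lambda_{\min}(-D) = -\max_j \pi_j$, or, in line with the overview, one applies the matrix determinant lemma to obtain the closed-form characteristic polynomial $\det(\lambda I - \tilde A) = \prod_{j}(\lambda+\pi_j)\bigl(1 - \sum_{j}\tfrac{\pi_j}{\lambda+\pi_j}\bigr)$ and verifies by a short sign analysis that this expression has no root with $\lambda < -\max_j \pi_j$. Since $\pi_j = t!/\prod_z c_z! \leq t!$, every eigenvalue of a single block of $\rho_n$ is at least $-t!/2^{tn}$, and the bound transfers to $\rho_n$ itself by the block-diagonal decomposition.

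I expect the main obstacle to be isolating the correct reduced matrix $\tilde A$ and verifying that restricting attention to the $k$-dimensional span of the class indicator vectors truly captures all nonzero eigenvalues of the block (equivalently, that the orthogonal complement lies in the kernel). Once this reduction is in place, the rank-one-plus-diagonal structure makes the spectral bound essentially a one-line application of either Weyl's inequality or the matrix determinant lemma, and the combinatorial inequality $\pi_j \leq t!$ completes the proof.
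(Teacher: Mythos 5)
Your proof is correct, and it arrives at the same per-block characteristic polynomial as the paper, but by a genuinely different and cleaner route. The paper works directly with the full $2^{tn}\times 2^{tn}$ matrix $\rho_n-\lambda I$ and triangularizes it through five rounds of explicit elementary row and column operations organized around ``sentinel'' representatives of the permutation and stabilization classes; reading off the diagonal of the resulting triangular matrix yields exactly your factorization $\prod_j(\lambda+\pi_j)\bigl(1-\sum_j \pi_j/(\lambda+\pi_j)\bigr)$ per non-trivial stabilization class (up to sign and the $2^{-tn}$ scaling), and the same sign analysis then rules out roots below $-\max_j\pi_j$. Your reduction of each stabilization block to the $k\times k$ matrix $\tilde A=uu^\top-D$ is valid: the block is $2^{-tn}\bigl(J_B-\bigoplus_j J_{P_j}\bigr)=2^{-tn}\bigl(\mathbf{1}_B\mathbf{1}_B^\top-\sum_j\mathbf{1}_{P_j}\mathbf{1}_{P_j}^\top\bigr)$, whose image lies in $\mathrm{span}\{\mathbf{1}_{P_j}\}$, and by symmetry the orthogonal complement of that span is the kernel, so the step you flag as the main obstacle is a one-line check. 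What your approach buys is twofold: the Weyl route makes the bound immediate from the observation that $\tilde A$ is a PSD rank-one perturbation of $-D$, with no determinant computation at all; and the matrix determinant lemma delivers the closed-form characteristic polynomial in one step rather than via a page of row and column reductions. The combinatorial input $\pi_j=t!/\prod_z c_z!\leq t!$ is identical in both arguments, so the final bound is the same.
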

	\noindent Note that this will give an upper bound on the absolute values of all negative eigenvalues of $\rho_n$.
	
	Given the last two lemmas, we can prove Lemma \ref{Th_1}.
	
	\begin{proof}
		Let $n \in \bbN$ and let $t \in \{ 1, 2, \cdots, 2^n - 1 \}$.
		$\rho_n$ is a difference between two density matrices, and because that trace is linear and density matrices have a trace of 1, the trace of $\rho_n$ is 0.
		Also recall that the sum of eigenvalues of a matrix is equal to its trace, so, the positive and negative eigenvalues of $\rho_n$ balance each other to 0, and thus, a bound on the sum of absolute values of all eigenvalues of $\rho_n$ can be obtained by bounding the sum of the absolute values of its negative eigenvalues.
		Formally:
		$$
		\td\big( \rho_{(t, n, 2)}, \rho_{ ( t, n, 2^n ) } \big) = 
		\frac{1}{2}\norm{\rho_n}_1 = 
		\frac{1}{2} \cdot \sum_{\lambda \text{ eigenvalue of $\rho_n$}} |\lambda| = 
		\sum_{\lambda \text{ negative eigenvalue of $\rho_n$}} |\lambda| \enspace .
		$$
		
		Using Lemma \ref{Lem_2} and Lemma \ref{Lem_3}, we obtain an upper bound on the last sum, which yields the wanted inequality.
		
		\begin{gather*}
		\sum_{\lambda \text{ negative eigenvalue of $\rho_n$}} |\lambda| \leq
		\Bigg( {2^n + t - 1 \choose t} - {2^n \choose t} \Bigg)\frac{t!}{2^{tn}} = \\
		\frac{(2^n + t - 1)!}{2^{tn}(2^n - 1)!} - \frac{(2^n)! }{2^{tn} ( 2^n - t )!} = 
		\frac{(2^n)! \big( \prod_{i \in [t-1]}( 2^n + i ) \big)}{2^{tn}(2^n - 1)!} - \frac{(2^n)! \big( \prod_{i \in [t-1]}( 2^n - i ) \big)}{2^{tn}(2^n - 1)!} = \\	
		\frac{\prod_{i \in [t-1]}( 2^n + i )}{2^{(t - 1)n}} - \frac{\prod_{i \in [t-1]}( 2^n - i )}{2^{(t - 1)n}} = 
		\prod_{i \in [t-1]}\frac{ 2^n + i }{2^{n}} - \prod_{i \in [t-1]}\frac{ 2^n - i }{2^{n}} = \\
		\prod_{i \in [t-1]}\Bigg( 1 + \frac{i}{2^n} \Bigg) - \prod_{i \in [t-1]}\Bigg( 1 - \frac{i}{2^n} \Bigg) \enspace .
		\end{gather*}
	\end{proof}
	
	\subsection{The Structure of the Matrix $\rho_n$}
	
	We identify the structure of $\rho_n$ in order to prove Lemma \ref{rho_characterization}, which will be used in both proofs of Lemmas \ref{Lem_2}, \ref{Lem_3}.
	We do this by first describing $\rho_{( t, n, 2^n )}$ and $\rho_{( t, n, 2 )}$.
	More precisely, we will derive combinatorial expressions for $\rho_{( t, n, 2^n )}$ and $\rho_{( t, n, 2 )}$, and as a consequence we'll have an expression for their difference $\rho_n$.
	
	\subsubsection{The Structure of $\rho_{ ( t, n, 2^n ) }$}
	We will start with giving a formula for the entries of $\rho_{ ( t, n, 2^n ) }$; for convenience, the definition is restated:
	$$
	\rho_{( t, n, 2^n )} = 
	\bbE_{f \gets [ 2^n ]^{\binset^n}}\big[ ( \ket{f}\bra{f} )^{\otimes t} \big] = 
	\bbE_{f \gets [ 2^n ]^{\binset^n}}\big[ \ket{f}^{\otimes t}\bra{f}^{\otimes t} \big] \enspace .
	$$
	
	Observe that for a function $f : \binset^n \rightarrow [2^n]$, 
	$$
	\ket{f}^{\otimes t} = \Bigg( 2^{-n/2}\sum_{x \in \binset^n}\omega_{2^n}^{f(x)}\ket{x} \Bigg)^{\otimes t} = 
	2^{-tn/2}\sum_{\mathbf{x} = ( x_1, \cdots, x_t ) \in \binset^{n\times t}}\omega_{2^n}^{( \sum_{i \in [t]} f(x_i) )}\ket{\mathbf{x}} \enspace .
	$$
	
	Now we can compute $\rho_{( t, n, 2^n )}$:
	$$
	\rho_{( t, n, 2^n )} =
	\bbE_{f}\big[ \ket{f}^{\otimes t}\bra{f}^{\otimes t} \big] = 
	$$
	$$
	\bbE_{f}\Bigg[
	\Bigg( 2^{-tn/2}\sum_{\mathbf{x} = ( x_1, \cdots, x_t ) \in \binset^{n\times t}}\omega_{2^n}^{( \sum_{i \in [t]} f(x_i) )}\ket{\mathbf{x}} \Bigg) \cdot
	\Bigg( 2^{-tn/2}\sum_{\mathbf{y} = ( y_1, \cdots, y_t ) \in \binset^{n\times t}}\omega_{2^n}^{( -\sum_{i \in [t]} f(y_i) )}\bra{\mathbf{y}} \Bigg)
	\Bigg] = 
	$$
	$$
	2^{-tn}\sum_{\mathbf{x}, \mathbf{y} \in \binset^{n\times t}}\ket{\mathbf{x}}\bra{\mathbf{y}} \cdot \bbE_{f}\Bigg[ \omega_{2^n}^{( \sum_{i \in [t]} f(x_i) - \sum_{i \in [t]} f(y_i) )} \Bigg] \enspace , 
	$$
	So, for $\mathbf{x}, \mathbf{y} \in \binset^{n\times t}$, the $(\mathbf{x}, \mathbf{y})$-th entry of $\rho_{(t, n, 2^n)}$ is
	$$
	2^{-tn}\cdot \bbE_{f}\Bigg[ \omega_{2^n}^{( \sum_{i \in [t]} f(x_i) - \sum_{i \in [t]} f(y_i) )} \Bigg] \enspace .
	$$
	
	Now, define:
	\begin{definition}[$(t, n)$ permutations]
		Let $\mathbf{x}, \mathbf{y} \in \binset^{t \times n}$, and denote $\mathbf{x} = ( x_1, \cdots, x_t ), \mathbf{y} = ( y_1, \cdots, y_t )$, where $\forall i \in [t]: x_i, y_i \in \binset^n$.
		We say that $\mathbf{x}, \mathbf{y}$, are $( t, n )$ permutations of each other (or just permutations of each other) if there exists a permutation $\pi \in S_t$ s.t.
		$$
		( x_1, \cdots, x_t ) = ( y_{\pi(1)}, \cdots, y_{\pi(t)} ) \enspace .
		$$
	\end{definition}
	Note that an equivalent convenient characterization of the two strings $\mathbf{x}, \mathbf{y}$ being permutations of each other is that the multisets $\{ x_1, \cdots, x_t \}, \{ y_1, \cdots, y_t \}$ are equal.
	
	Observe that when $\mathbf{x}$ and $\mathbf{y}$ are permutations of each other, then for every $f$ we have $\sum_{i \in [t]} f(x_i) = \sum_{i \in [t]} f(y_i)$ and thus the expected value is 1 and the entry's value is $2^{-tn}$.
	We would like to also claim that if $\mathbf{x}, \mathbf{y}$ are not permutations of each other then the entry is 0, and it turns out we indeed can.
	Observe that if $\mathbf{x}, \mathbf{y}$ are not permutations of each other then there exists a string $s \in \binset^n$ that appears a different number of times in $\mathbf{x}$ and $\mathbf{y}$, and we can say that the $( \mathbf{x}, \mathbf{y} )$-th entry is
	$$
	2^{-tn}\cdot \bbE_{f}\Big[ \omega_{2^n}^{( \sum_{i \in [t]} f(x_i) - \sum_{i \in [t]} f(y_i) )} \Big] = 
	2^{-tn}\cdot \beta \cdot \bbE_{f}\Big[ \omega_{2^n}^{a \cdot f(s)} \Big] \enspace ,
	$$
	where $\beta\in \bbR$ is some real number (which we won't care about) and $a \in \{ -t, \cdots , -1, 1, \cdots, t \}$ is the (non-zero) difference between the number of appearances of $s$ in $\mathbf{x}$ and $\mathbf{y}$ (last equality follows from the fact that the expectation of a product of independent random variables is the product of expectations).
	Now we will use our restriction on $t$, which is that $t$ is strictly smaller then $2^n$.
	Combined with the fact that $a\neq 0$, it is necessarily the case that $\omega_{2^n}^{a} \neq 1$ (if $t$ could be as big as $2^n$ then $a$ will be able to be $2^n$ or some integer multiple of it, which will yield $\omega_{2^n}^{a} = 1$).
	After this restriction we obtain:
	$$
	\bbE_{f}\Big[ \omega_{2^n}^{a \cdot f(s)} \Big] =
	\sum_{i \in \{ 0, 1, \cdots, 2^n-1 \}}2^{-n}\cdot \omega_{2^n}^{a\cdot i} = 
	2^{-n}\cdot \Bigg( \frac{\omega_{2^n}^{a\cdot 2^n} - 1}{\omega_{2^n}^{a} - 1} \Bigg) = 0\enspace ,
	$$
	Finally, the above yields a combinatorial description of $\rho_{( t, n, 2^n )}$:
	$$
	\forall \mathbf{x}, \mathbf{y} \in \binset^{n \times t}: \rho_{( t, n, 2^n )}[ \mathbf{x}, \mathbf{y} ] = \begin{cases}
	2^{-tn} &\quad\enspace \mathbf{x}, \mathbf{y} \text{ are permutations}\\
	0 &\quad\enspace \mathbf{x}, \mathbf{y} \text{ are not permutations}\\
	\end{cases} \enspace .
	$$

	\subsubsection{The Structure of $\rho_{ ( t, n, 2 ) }$}
	By the same reasoning as in the case of $\rho_{ ( t, n, 2^n ) }$, we obtain that the $(\mathbf{x}, \mathbf{y})$-th entry of $\rho_{( t, n, 2 )}$ is
	$$
	2^{-tn}\cdot \bbE_{f}\Big[ (-1)^{( \sum_{i \in [t]} f(x_i) - \sum_{i \in [t]} f(y_i) )} \Big] \enspace ,
	$$
	where this time $f$ is a random function from $\binset^n$ to $\binset$ (rather than from $\binset^n$ to $[2^n]$).
	Because $( -1 ) = ( -1 )^{-1}$, the entry is simplified to 
	$$
	2^{-tn}\cdot \bbE_{f}\Big[ (-1)^{( \sum_{i \in [t]} f(x_i) + \sum_{i \in [t]} f(y_i) )} \Big] \enspace .
	$$
	
	Like in the case of $\rho_{( t, n, 2^n )}$, we would like a nice and clean combinatorial predicate to describe the entries of the matrix, and as we'll see in a bit, the matrix $\rho_{ ( t, n, 2 ) }$ indeed have the same general structure as $\rho_{ ( t, n, 2^n ) }$ but with different predicate on $\mathbf{x}, \mathbf{y}$.
	
	First, define the following:
	\begin{definition}[$(t, n)$ stabilizations]
		Let $\mathbf{x}, \mathbf{y} \in \binset^{t \times n}$, and denote $\mathbf{x} = ( x_1, \cdots, x_t ), \mathbf{y} = ( y_1, \cdots, y_t )$, where $\forall i \in [t]: x_i, y_i \in \binset^n$.
		We say that $\mathbf{x}, \mathbf{y}$, are $( t, n )$ stabilizations of each other (or just stabilizations of each other) if in the concatenated string $( \mathbf{x} \; \mathbf{y} ) = ( x_1, \cdots, x_t, y_1, \cdots, y_t )$, for every $s \in \binset^n$, $s$ appears an even number of times (this, of course, includes appearing 0 times).
	\end{definition}
	
	We note that the stabilization relation (which is all pairs that stabilize each other) is an equivalence relation over the set $\binset^{n \times t}$ (just like the permutation relation, which we didn't mention it being an equivalence relation, but it can easily be seen as one).
	It is clear that the stabilization relation is reflexive ($\mathbf{x}$ is always stabilizing $\mathbf{x}$), and it is also easy to verify that it is symmetric.
	To see why it is also transitive, we will use an additional characterization:
	\begin{definition}
		For a string $\mathbf{z} = ( z_1, \cdots, z_t )\in \binset^{n \times t}$ with $\forall i \in [t]: z_i \in \binset^n$, $\mathrm{Odd}( \mathbf{z} )$ is the set of strings from $\binset^n$ that appear an odd number of times in the sequence $( z_1, \cdots, z_t )$.
	\end{definition}
	
	\noindent For example, if $n = 3$ and $t = 7$ then $\mathrm{Odd}( 101, 111, 101, 000, 011, 111, 111 ) = \{ 111, 000, 011 \}$.
	
	We claim that two strings $\mathbf{x}, \mathbf{y}$ are stabilizations of each other if and only if $\mathrm{Odd}( \mathbf{x} ) = \mathrm{Odd}( \mathbf{y} )$.
	It is easy to verify the correctness of this claim, and also the fact that this claim implies the transitivity of the stabilization relation.
	
	To identify the elements of $\rho_{( t, n, 2 )}$ it remains to observe that when $\mathbf{x}, \mathbf{y}$ are stabilizations of each other then the entry is $2^{-tn}$, and when they are not, then we have $\mathrm{Odd}( \mathbf{x} ) \neq \mathrm{Odd}( \mathbf{y} )$ and it can be verified that the entry is 0, which yields the following description of $\rho_{ ( t, n, 2 ) }$:
	$$
	\forall \mathbf{x}, \mathbf{y} \in \binset^{n \times t}: \rho_{( t, n, 2 )}[ \mathbf{x}, \mathbf{y} ] = \begin{cases}
	2^{-tn} &\quad\enspace \mathbf{x}, \mathbf{y} \text{ are stabilizations}\\
	0 &\quad\enspace \mathbf{x}, \mathbf{y} \text{ are not stabilizations}\\
	\end{cases} \enspace .
	$$

	\subsubsection{Conclusion}
	
	Note that if $\mathbf{x}, \mathbf{y}$ are permutations then they necessarily stabilize each other, but the opposite is not true generally, furthermore, it is fairly easy to find stabilizing pairs that are not permutations, for instance $( 111, 000, 101, 101, 000 )$ and $( 110, 111, 111, 111, 110 )$.
	We'll call a pair of strings that suffice this demand (i.e. stabilize each other but are not permutations) remotely stabilized, that is:
	\begin{definition}[$(t, n)$ remote stabilizations]\label{remote_def}
		Let $\mathbf{x}, \mathbf{y} \in \binset^{t \times n}$, we say that $\mathbf{x}, \mathbf{y}$ are $( t, n )$ remote stabilizations of each other (or just remote stabilizations of each other) if they are stabilizations of each other but are not permutations of each other.
	\end{definition}
	
	In contrast to the cases of permutation and stabilization, remote stabilization is not an equivalence relation, and thus (generally speaking) it is harder to work with it.
	The stabilization relation is symmetric, but it is not reflexive, and in fact it is anti-reflexive, because a string is always a permutation of itself (and thus not a remote stabilization of itself), and it is also not transitive, because a (non-empty) relation which is symmetric and anti-reflexive can't be transitive.
	
	As said above, two strings that are permutations of each other are necessarily stabilizations of each other (in other words, the permutation relation is a refinement of the stabilization relation), and we deduce that $\rho_n$ has no negative terms and is also binary (scaled by $2^{-tn}$).
	Finally, this proves the characterization lemma of $\rho_n$.
	\begin{lemma}\label{rho_characterization}
		Let $n \in \bbN$ and let $t \in \{ 1, 2, \cdots, 2^n - 1 \}$, then the entries of $\rho_n$ can be given by the following formula:
		$$
		\forall \mathbf{x}, \mathbf{y} \in \binset^{n \times t}: \rho_{n}[ \mathbf{x}, \mathbf{y} ] = \begin{cases}
		2^{-tn} &\quad\enspace \mathbf{x}, \mathbf{y} \text{ are remote stabilizations}\\
		0 &\quad\enspace \mathbf{x}, \mathbf{y} \text{ are not remote stabilizations}\\
		\end{cases} \enspace .
		$$
	\end{lemma}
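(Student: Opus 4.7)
The plan is to compute the entries of $\rho_{(t,n,2^n)}$ and $\rho_{(t,n,2)}$ separately in closed combinatorial form, and then take the difference. Since each matrix arises as an expectation of a rank-one operator over a random function, the natural first step is to expand $\ket{f}^{\otimes t}$ as a sum over tuples $\mathbf{x}=(x_1,\dots,x_t)\in\binset^{n\times t}$. This reduces the $(\mathbf{x},\mathbf{y})$ entry of $\rho_{(t,n,2^n)}$ to $2^{-tn}\cdot \mathbb{E}_f\bigl[\omega_{2^n}^{\sum_i f(x_i)-\sum_i f(y_i)}\bigr]$, and analogously for $\rho_{(t,n,2)}$ with $\omega_{2^n}$ replaced by $-1$.

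For $\rho_{(t,n,2^n)}$, I would group the exponent by input symbol: the expectation factorizes across $s\in\binset^n$ into $\prod_s \mathbb{E}\bigl[\omega_{2^n}^{a_s f(s)}\bigr]$, where $a_s$ is the difference between the number of times $s$ appears in $\mathbf{x}$ and in $\mathbf{y}$. When $\mathbf{x},\mathbf{y}$ are permutations of each other, every $a_s=0$ and the entry equals $2^{-tn}$. Otherwise some $a_s\neq 0$ with $|a_s|\le t<2^n$, so $\omega_{2^n}^{a_s}\neq 1$; the standard geometric sum $\sum_{i=0}^{2^n-1}\omega_{2^n}^{a_s i}=0$ kills the corresponding factor and the entry vanishes. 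Here the hypothesis $t\le 2^n-1$ is crucial, and this is the only place any real work happens.

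For $\rho_{(t,n,2)}$, the same factorization applies, but since $(-1)^a$ depends only on the parity of $a$, what matters is whether each $s$ appears an even number of times in the concatenation $(\mathbf{x},\mathbf{y})$. When this parity condition holds, i.e.\ when $\mathbf{x},\mathbf{y}$ are stabilizations of each other (equivalently $\mathrm{Odd}(\mathbf{x})=\mathrm{Odd}(\mathbf{y})$), every per-symbol expectation is $1$ and the entry equals $2^{-tn}$. Otherwise some $s$ has odd total multiplicity; the corresponding factor is $\tfrac12(1+(-1))=0$ and the entry is $0$. A small verification shows that the relation defined via $\mathrm{Odd}(\cdot)$ is indeed an equivalence relation, which will be convenient for later use.

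Combining the two formulas, the difference $\rho_n$ has entry $2^{-tn}$ precisely when $\mathbf{x},\mathbf{y}$ stabilize each other but are not permutations, and $0$ otherwise. Since permutation is a refinement of stabilization, no cancellation produces negative values and the claimed characterization follows. There is no real obstacle to this lemma beyond being careful with the constraint $t<2^n$ used in the complex-phase case; the genuinely hard parts of the paper lie in the subsequent bounds on the rank and the smallest eigenvalue of this combinatorially described matrix.
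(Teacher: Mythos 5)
Your proposal is correct and follows essentially the same route as the paper: expand the $t$-fold tensor power, reduce each entry to an expectation that factorizes over input symbols $s\in\binset^n$, use the vanishing geometric sum (where $t<2^n$ is needed) for the complex-phase case and the parity condition for the binary case, and subtract. The only cosmetic difference is that you factorize the expectation fully over all symbols, whereas the paper isolates a single symbol with nonzero multiplicity difference; both yield the same characterization.
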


	\subsection{Proof of Lemma \ref{Lem_2}}
	\begin{proof}
		We will give an upper bound on the number of non-zero eigenvalues of $\rho_n$.
		$\rho_n$ is hermitian (and in particular diagonalizable) and thus the sum of dimensions of its eigenspaces sums up to the order of the matrix, which is $2^{tn}$.
		Also recall that the 0-eigenspace of $\rho_n$ is its kernel, thus by the rank-nullity theorem, the dimension of the 0-eigenspace plus the rank of $\rho_n$ equals the order of the matrix, $2^{tn}$.
		This means that the rank of $\rho_n$ equals the sum of dimensions of non-zero eigenspaces of $\rho_n$, which is exactly the number of non-zero (possibly identical) eigenvalues of $\rho_n$, thus, by giving an upper bound of $\text{rank}( \rho_n )$, we get an upper bound on the number of non-zero eigenvalues of $\rho_n$.
		
		It is a well known fact in linear algebra that elementary row operations does not change the rank of a matrix, it is also known that the rank of a matrix is bounded from above by the number of non-zero rows (the rank is the dimension of the row space, which in turn cannot be more than the number of non-zero rows), thus our bound on the rank of $\rho_n$ will come from looking at $\rho'_n$, a row-equivalent matrix to $\rho_n$, and bounding its number of non-zero rows.
		
		$\rho_n'$ is obtained by the following procedure:
		Recall that the permutation relation and the stabilization relation are both equivalence relations on $\binset^{t \times n}$ and thus induce equivalence classes.
		It will be useful (also for the proof of the next lemma) to define the following:
		\begin{definition}[Sentinel of an Equivalence Class]
			Let $C$ be an equivalence class of one of the two equivalence relations above (either the permutation relation or the stabilization relation).
			We define $\mathbf{x}_C \in \binset^{tn}$ the sentinel of $C$ to be the element in $C$ with the largest lexicographic order (where the lexicographic order of strings is as usual, with the most significant bit on the left, and least significant bit on the right).
		\end{definition}
		
		\begin{observation}
			Let $P$ be a permutation class of $\binset^{tn}$.
			Then, every pair in it $\mathbf{x}, \mathbf{y} \in P$ have the same set of remote stabilizers, and thus have identical rows in $\rho_n$.
		\end{observation}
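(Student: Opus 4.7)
The plan is to show that for any fixed $\mathbf{z} \in \binset^{tn}$, the predicate ``$\mathbf{x}$ and $\mathbf{z}$ are remote stabilizations'' depends only on the permutation class of $\mathbf{x}$, and then invoke Lemma~\ref{rho_characterization} to conclude that rows indexed by permutation-equivalent strings are identical. Concretely, I would let $\mathbf{x}, \mathbf{y} \in P$ be arbitrary, fix an arbitrary $\mathbf{z} \in \binset^{tn}$, and verify the two sub-predicates defining remote stabilization (``stabilizes'' and ``is not a permutation of'') independently.

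For the stabilization part, the key input is the $\mathrm{Odd}(\cdot)$ characterization established just above: two strings stabilize each other iff they have equal $\mathrm{Odd}(\cdot)$ sets. Since $\mathbf{x}$ and $\mathbf{y}$ are permutations of each other, they have the same multiset of $n$-bit entries, so in particular $\mathrm{Odd}(\mathbf{x}) = \mathrm{Odd}(\mathbf{y})$. Hence $\mathbf{x}$ stabilizes $\mathbf{z}$ iff $\mathrm{Odd}(\mathbf{x}) = \mathrm{Odd}(\mathbf{z})$ iff $\mathrm{Odd}(\mathbf{y}) = \mathrm{Odd}(\mathbf{z})$ iff $\mathbf{y}$ stabilizes $\mathbf{z}$. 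For the permutation part, we simply use that being permutations is itself an equivalence relation: since $\mathbf{x} \sim \mathbf{y}$ in that relation, transitivity gives $\mathbf{x}$ is a permutation of $\mathbf{z}$ iff $\mathbf{y}$ is a permutation of $\mathbf{z}$. Conjoining the first equivalence with the negation of the second yields that $\mathbf{x}$ remotely stabilizes $\mathbf{z}$ iff $\mathbf{y}$ does.

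By Lemma~\ref{rho_characterization}, the $(\mathbf{x}, \mathbf{z})$-th entry of $\rho_n$ is $2^{-tn}$ when $\mathbf{x}$ and $\mathbf{z}$ are remote stabilizations and $0$ otherwise, and the same holds for $(\mathbf{y}, \mathbf{z})$. Since this remote-stabilization status is the same for $\mathbf{x}$ and $\mathbf{y}$ across every choice of $\mathbf{z}$, the rows of $\rho_n$ indexed by $\mathbf{x}$ and $\mathbf{y}$ agree entry-by-entry, which is the claim. There is essentially no obstacle here: the statement is a bookkeeping consequence of (i) the $\mathrm{Odd}(\cdot)$ characterization of stabilization, (ii) transitivity of the permutation relation, and (iii) the row formula for $\rho_n$ already derived; no calculation beyond these invocations is needed.
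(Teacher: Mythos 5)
Your proof is correct and is exactly the argument the paper leaves implicit (the observation is stated without proof): stabilization status with any fixed $\mathbf{z}$ is determined by $\mathrm{Odd}(\cdot)$, which is invariant under permutation, and permutation status with $\mathbf{z}$ is preserved by transitivity, so the remote-stabilization predicate and hence the row entries coincide for all of $P$.
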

		
		This means we can erase a bunch of redundant rows from $\rho_n$;
		for each permutation class $P$, take the sentinal row $\mathbf{x}_P$ of $P$ and subtract it from all other rows of strings from $P$.
		In the obtained matrix $\rho_n'$, the only non-zero rows are of sentinels.
		
		The number of sentinels is exactly the number of equivalence classes of the permutation relation, which in turn is the number of different multisets of $t$ elements from $\binset^n$ (note that a permutaion class can be defined by a multiset from $\binset^n$ of size $t$), and that number is known as common knowledge in combinatorics, usually referred to as "n multichoose k", in our case, ${2^n + t - 1 \choose t}$.
		
		\begin{observation}\label{Obs_2}
			Let $P$ be a permutation class of a multiset of $t$ distinct elements (essentially, a permutation class of a set of size $t$ with elements from $\binset^{n}$),
			then each of its elements have no remote stabilizers.
		\end{observation}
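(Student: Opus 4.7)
The plan is to prove the observation directly via the $\mathrm{Odd}(\cdot)$ characterization of the stabilization relation already established in the preceding subsection. Specifically, I will show the stronger statement that every stabilizer of such an $\mathbf{x}$ is already a permutation of $\mathbf{x}$, which immediately rules out remote stabilizers.

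Fix an arbitrary $\mathbf{x} = (x_1, \ldots, x_t) \in P$. Since $P$ is by hypothesis the permutation class of a set of $t$ distinct elements, the coordinates $x_1, \ldots, x_t$ are pairwise distinct, so $\mathrm{Odd}(\mathbf{x}) = \{x_1, \ldots, x_t\}$ is a set of size exactly $t$. Suppose some $\mathbf{y} = (y_1, \ldots, y_t) \in \binset^{n \times t}$ stabilizes $\mathbf{x}$; by the characterization recalled just above, this is equivalent to $\mathrm{Odd}(\mathbf{y}) = \{x_1, \ldots, x_t\}$.

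For each $s \in \binset^n$ let $m_s$ denote the multiplicity of $s$ among $y_1, \ldots, y_t$, so $\sum_{s \in \binset^n} m_s = t$. The condition $\mathrm{Odd}(\mathbf{y}) = \{x_1, \ldots, x_t\}$ means that $m_{x_i}$ is odd (hence $\geq 1$) for every $i \in [t]$, while $m_s$ is even (hence $\geq 0$) for every $s \notin \{x_1, \ldots, x_t\}$. The $t$ terms $m_{x_1}, \ldots, m_{x_t}$ therefore already contribute at least $t$ to the total $\sum_s m_s = t$, which forces $m_{x_i} = 1$ for every $i$ and $m_s = 0$ for every other $s$. Thus $\mathbf{y}$ contains each $x_i$ exactly once and nothing else, i.e.\ $\mathbf{y}$ is a permutation of $\mathbf{x}$; in particular $\mathbf{y}$ is not a remote stabilizer of $\mathbf{x}$ (per Definition~\ref{remote_def}).

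Since this argument applies to an arbitrary $\mathbf{x} \in P$, no element of $P$ has any remote stabilizer. No real obstacle is expected here: the content of the claim reduces to a one-line pigeonhole on the multiplicity vector of $\mathbf{y}$, once the $\mathrm{Odd}(\cdot)$ characterization is in hand.
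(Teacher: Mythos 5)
Your proof is correct and matches the intended (but omitted) argument: the paper states this observation without proof, and your pigeonhole argument on the multiplicity vector via the $\mathrm{Odd}(\cdot)$ characterization is exactly the natural formalization. Nothing further is needed.
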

		
		The above observation basically says that strings of $t$ distinct elements are a special case where every stabilizer of them is also a permutation of them.
		This observation is useful to us because it means that for every permutation class $P$ of $t$ distinct elements, all rows of $P$ are zero-rows in the original $\rho_n$ (and thus so in $\rho_n'$).
		
		Furthermore, the reason that observation \ref{Obs_2} is important to our proof comes from the fact that there are ${2^n \choose t}$ such permutation classes, which is an overwhelming precentage from the total number of permutation classes ${2^n + t - 1}\choose{t}$.
		To conclude, we said that in $\rho_n'$, only the sentinels can possibly have non-zero rows, and that there are ${2^n + t - 1 \choose t}$ sentinels in total, but now we add the information that out of these ${2^n + t - 1 \choose t}$ sentinels, ${2^n \choose t}$ have zero rows, and thus, there are at most 
		$$
		{2^n + t - 1 \choose t} - {2^n \choose t}
		$$
		non-zero rows in $\rho_n'$ (and as a side note, there are in fact more zero-rows, for instance, for permutation classes of a multisets of the same element appearing $t$ times, but we won't care about these as their precentage is negligible).
		This concludes our proof of Lemma \ref{Lem_2}.
	\end{proof}

	\subsection{Proof of Lemma \ref{Lem_3}}
	We will give a lower bound on the most negative eigenvalue of $\rho_n$.
	Recall that $\rho_n$ is hermitian and thus has only real eigenvalues.
	Let $\lambda \in \bbR$, we know that $\lambda$ is an eigenvalue $\rho_n$ if and only if $\det( \rho_n - \lambda I ) = 0$.
	Denote by $\mathcal{A}$ the set of negative relative sizes of the permutation classes (along with 0),
	$$
	\mathcal{A} := \Biggl\{ -\frac{|P|}{2^{tn}} \enspace |\enspace P \text{ is a permutation class} \Biggl\} \; \cup \; \{ 0 \} \enspace , 
	$$
	where for a permutation class, its size is the number of different possible permutations of it, e.g. if $P$ is a permutation class of a multiset of the same element $t$ times, then $|P| = 1$, if $P$ is a permutation class of a multiset of $t$ distinct elements (in this specific case it is also a set) then $|P| = t!$, and if $P$ is a permutation class of a multiset of $(t - 2)$ distinct elements plus an additional distinct element that appears twice, then $|P| = {t \choose 2}\cdot (t - 2)!$ .
	We will show that there are no eigenvalues of $\rho_n$ smaller then all elements of $\mathcal{A}$, which will give us a lower bound of $( -\frac{t!}{2^{tn}} )$ on the minimal eigenvalue of $\rho_n$ (as this is the minimal element in $\mathcal{A}$).
	
	\paragraph{Determinant Calculation.}	
	Let $\lambda \in \bbR \setminus \mathcal{A}$, we will compute det$( \rho_n - \lambda I )$ by making the matrix $\rho_n - \lambda I$ lower triangular, and note that by assuming that $\lambda \notin \mathcal{A}$ we can, through the calculation of the determinant, apply row and column operations divided by expressions of the form $( \lambda + \frac{|P|}{2^{tn}} )$ for a permutation class $P$, or just by $\lambda$.
	
	Recall that the permutation relation is a refinement of the stabilization relation, which means that every stabilization class can be divided into a bunch of permutation classes, also recall that in the proof of Lemma \ref{Lem_2} we saw that for some stabilization classes, they are exactly a single permutation class and not a few (for example, according to the second observation in the proof, this is the case for permutation classes of sets of size $t$).
	We'll call such stabilization classes {\it trivial stabilization classes}, and because such stabilization classes are also permutation classes, we can say that these classes are also {\it trivial permutation classes}.
	By the second observation from the proof of Lemma \ref{Lem_2}, for a trivial stabilization class $S$, all of its rows are zero rows in $\rho_n$ and thus have only $-\lambda$ on their diagonal in the matrix $\rho_n - \lambda I$.
	The goal of the above reminder is to say that during the following determinant calculation, we will implicitly refer to {\it non-trivial stabilization classes} only.
	
	In order for the calculation of the determinant to be easy to follow, we will mention at each step $i$ the current operation we apply to the matrix, and then the value of the matrix after the operation.
	
	\begin{itemize}
		\item[-]
		\underline{\bf State 0:}
		This is just the matrix $\rho_n - \lambda I$.
		For a row $\mathbf{x} \in \binset^{n \times t}$, the matrix contains a $-\lambda$ on the diagonal, and $2^{-tn}$ in each of the columns of its remote stabilizers, all the rest is 0.
		Note that two rows $\mathbf{x}$ and $\mathbf{y}$ of the same permutation class differ exactly in the two columns $\mathbf{x}$ and $\mathbf{y}$.
		
		\item 
		\underline{\bf Step 1:}
		For each permutation class $P$, take the sentinel row $\mathbf{x}_P$ and subtract it from all rows of $P$.
		
		\item[-] 
		\underline{\bf State 1:}
		After step 1, for each permutation class $P$, $\mathbf{x}_P$ is unchanged, and for each non-sentinel row, it has (the unchanged) $-\lambda$ on the diagonal and $\lambda$ on the column of the sentinel, all the rest of its row is 0.
		
		\item 
		\underline{\bf Step 2:}
		For every permutation class $P$, take every column of a non-sentinel and add the column to the column of $\mathbf{x}_P$.
		
		\item[-]
		\underline{\bf State 2:}
		After this step, we introduced a more broad change to the matrix.
		For each permutation class $P$, each non-sentinel row is all zeros except a single $-\lambda$ on its diagonal, but the row of the sentinel $\mathbf{x}_P$ also changed:
		Note that the column operations we executed by non-sentinel columns of $P$ did not change the row of $\mathbf{x}_P$, but column operations by other permutation classes possibly did change $\mathbf{x}_P$.
		More precisely, the column operations made by the non-sentinels of other permutation classes that are in the same stabilization class with $\mathbf{x}_P$, did change $\mathbf{x}_P$.
		
		More specifically, now the row $\mathbf{x}_P$ has a $-\lambda$ on the diagonal, and for every permutation class $P' \neq P$ that is in the same stabilization class with $P$, all non-sentinel columns are unchanged (each entry is $2^{-tn}$) and the sentinel column $\mathbf{x}_{P'}$ has $\frac{|P'|}{2^{tn}}$.
		
		\item 
		\underline{\bf Step 3:}
		For every permutation class $P$, take each non-sentinel row $\mathbf{y}$, and for each permutation class $P' \neq P$ s.t. $P'$ is in the same stabilization class with $P$, add the scaled row $2^{-tn}\cdot\frac{1}{\lambda}\cdot\mathbf{y}$ to the row of the sentinel $\mathbf{x}_{P'}$.
		
		\item[-]
		\underline{\bf State 3:}
		Observe that because after last step all non-sentinels had only $-\lambda$ on their diagonal, what we actually do is adding $2^{-tn}\cdot\frac{1}{\lambda}\cdot (-\lambda) = - 2^{-tn}$ to an entry that had exactly $2^{-tn}$.
		After the third step, all non-sentinels of $P$ are unchanged (and hold zero everywhere except $-\lambda$ on the diagonal), and $\mathbf{x}_P$ now has $-\lambda$ on the diagonal, zero everywhere else except in the columns of other sentinels from the same stabilization class, and these columns are left unchanged from last step's ending (column of sentinel of permutation class $P'$ holds $\frac{|P'|}{2^{tn}}$).
		
		\item 
		\underline{\bf Step 4:}
		Let $S$ be a stabilization class with permutation classes $P_1, \cdots, P_k$, and consider the $k$ corresponding sentinels of these $k$ permutation classes; one of these sentinels is also the sentinel of the entire stabilization class, $\mathbf{x}_S$, and denote the permutation class holdoing $\mathbf{x}_S$ with $P_S$.
		Take the row of $\mathbf{x}_S$ and subtract it from the rows of all other sentinels of the permutation classes inside $S$.
		
		\item[-]
		\underline{\bf State 4:}
		After this step, for every stabilization class $S$, the row of $\mathbf{x}_S$ is unchanged and thus have $-\lambda$ on the diagonal, and for each permutation class $P$ from the same stabilization class $S$ s.t. $P \neq P_S$, it has $\frac{|P|}{2^{{tn}}}$.
		For every permutation class $P \neq P_S$ of $S$, $\mathbf{x}_{P}$ has $-( \lambda + \frac{|P|}{2^{tn}} )$ on its diagonal, $( \lambda + \frac{|P_S|}{2^{tn}} )$ on the column of $\mathbf{x}_S$, and 0 everywhere else.
		
		\item 
		\underline{\bf Step 5:}
		For each stabilization class $S$, for each permutation class $P$ of $S$ with $P \neq P_S$, add the scaled column 
		$$
		\Bigg(\frac{\lambda + \frac{|P_S|}{2^{tn}}}{\lambda + \frac{|P|}{2^{tn}}}\Bigg)\cdot
		\mathbf{x}_P
		$$
		to the column of $\mathbf{x}_S$.
		
		\item[-] 
		\underline{\bf State 5:}
		After the fifth (and final) step, the matrix is lower triangular (because we nullified all entries of the column of $\mathbf{x}_S$, and in the row of $\mathbf{x}_S$, everything to the right of the diagonal is 0, because $\mathbf{x}_S$ has maximal lexicographic ordering in $S$).
		First, observe that for trivial stabilization classes, all rows of the class are unchanged (we didn't operate on them during the calculation, and they were unaffected by column operations from other classes) and have all zeros except a $-\lambda$ on the diagonal.
		
		Let $S$ be a non-trivial stabilization class, the values of its rows are as follows:
		\begin{itemize}
			\item For every non-sentinel row, it has $-\lambda$ on the diagonal and all the rest is 0.
			\item For every permutation class $P$ of $S$ with $P \neq P_S$, the sentinel $\mathbf{x}_P$ of the permutation class has $-( \lambda + \frac{|P|}{2^{tn}} )$ on the diagonal and all the rest is 0.
			\item For the row of $\mathbf{x}_S$, denote by $P_1, \cdots, P_k$, $P_k = P_S$ the permutation classes of $S$.
			For every $i \in [k - 1]$, in column $\mathbf{x}_{P_i}$ of $\mathbf{x}_S$ it has $\frac{|P_i|}{2^{tn}}$ (but we actually won't care about this, only about the diagonal), and on the diagonal it has
			$$
			-\lambda + \sum_{i \in [k - 1]}\Bigg( \frac{\frac{|P_i|}{2^{tn}}}{( \lambda + \frac{|P_i|}{2^{tn}} )} \cdot \Big( \lambda + \frac{|P_S|}{2^{tn}} \Big) \Bigg) = 
			-\lambda + \Big( \lambda + \frac{|P_S|}{2^{tn}} \Big) \cdot \sum_{i \in [k - 1]}\Bigg( \frac{\frac{|P_i|}{2^{tn}}}{( \lambda + \frac{|P_i|}{2^{tn}} )} \Bigg) \enspace .
			$$
		\end{itemize}
	\end{itemize}

	Finally, from the values on the diagonal of the triangular matrix above, the determinant is 
	\begin{gather*}
	\forall \lambda \in ( \bbR \setminus \mathcal{A} ) : \det( \rho_n - \lambda\cdot I ) = 
	\prod_{S \text{ trivial stabilization class}}( -\lambda )^{|S|} \cdot \\
	\prod_{S \text{ non-trivial stabilization class}}
	\Big(\prod_{\; \mathbf{x} \text{ non-sentinel in $S$}}( -\lambda ) \Big) \cdot
	\Bigg(\prod_{P \text{ permutation class in $S$ with $P \neq P_S$}} \Big( - \lambda - \frac{|P|}{2^{tn}} \Big) \Bigg) \cdot \\
	\Bigg( -\lambda + \Big( \lambda + \frac{|P_S|}{2^{tn}} \Big) \cdot \sum_{P \text{ permutation class in $S$ with $P \neq P_S$}}\Bigg( \frac{\frac{|P|}{2^{tn}}}{( \lambda + \frac{|P|}{2^{tn}} )}  \Bigg) \Bigg)
	\enspace .
	\end{gather*}
	
	\paragraph{Using The Determinant to Show The Lower Bound.}
	Given the above determinant result, we can now finally prove Lemma \ref{Lem_3}.
	\begin{proof}
		Assume towards contradiction that there is a real number $\lambda' < -\frac{t!}{2^{tn}}$ (note that this implies $\lambda' \notin \mathcal{A}$) such that it is an eigenvalue of $\rho_n$, thus $\det( \rho_n - \lambda'\cdot I ) = 0$ and thus it is necessarily the case that one of the terms in the above product (of the determinant) has to be 0.
		Due to $\lambda' \notin \mathcal{A}$, it can be seen that the only terms that can possibly be 0 in the above product are the terms of the diagonals of the sentinels of (non-trivial) stabilization classes, so let's check what happens in these terms.
		
		Let $S$ be a non-trivial stabilization class, and consider the term of the sentinel $\mathbf{x}_S$ in the product above: 
		$$
		-\lambda' + \Bigg( \lambda' + \frac{|P_S|}{2^{tn}} \Bigg) \cdot \sum_{P \text{ permutation class in $S$ with $P \neq P_S$}} \Bigg( \frac{\frac{|P|}{2^{tn}}}{( \lambda' + \frac{|P|}{2^{tn}} )} \Bigg) \enspace .
		$$
		We have,
		$$
		\forall P \text{ permutation class in $S$, including $P_S$} : \lambda' < -\frac{|P|}{2^{tn}} \enspace ,
		$$
		and thus
		$$
		\Bigg( \lambda' + \frac{|P_S|}{2^{tn}} \Bigg) < 0, \enspace \sum_{P \text{ permutation class in $S$ with $P \neq P_S$}}\Bigg( \frac{\frac{|P|}{2^{tn}}}{( \lambda' + \frac{|P|}{2^{tn}} )}  \Bigg) < 0\enspace ,
		$$
		which implies
		$$
		\Bigg( \lambda' + \frac{|P_S|}{2^{tn}} \Bigg) \cdot \sum_{P \text{ permutation class in $S$ with $P \neq P_S$}}\Bigg( \frac{\frac{|P|}{2^{tn}}}{( \lambda' + \frac{|P|}{2^{tn}} )}  \Bigg) > 0\enspace .
		$$
		Finally, due to $-\lambda'$ being in particular positive, the term has to be positive as well:
		$$
		-\lambda' + \Bigg( \lambda' + \frac{|P_S|}{2^{tn}} \Bigg) \cdot \sum_{P \text{ permutation class in $S$ with $P \neq P_S$}}\Bigg( \frac{\frac{|P|}{2^{tn}}}{( \lambda' + \frac{|P|}{2^{tn}} )}  \Bigg) > 0 \enspace ,
		$$
		in contradiction to $\det( \rho_n - \lambda' I ) = 0$.
	\end{proof}

	\subsubsection*{Acknowledgments} We thank Henry Yuen and Vinod Vaikuntanathan for insightful discussions. In particular thanks to Henry for pointing us to the \cite{JLS18} result.
	We also thank Aram Harrow for providing advice regarding the state of the art.

	\bibliographystyle{alpha}
	\bibliography{PseudoRandomQuantum}

\end{document}